\newtheorem{theorem}{Theorem}[section]
\newtheorem{proposition}[theorem]{Proposition}
\newcommand{\blind}{0}
\newcommand{\kld}{\mathop{\text{\footnotesize KLD}}}
\newcommand{\iae}{\mathop{\text{\footnotesize IAE}}}
\newcommand{\auc}{\mathop{\text{\footnotesize AUC}}}
\newcommand{\bftz}{\mathbf{\tilde{z}}}
\newcommand{\bfth}{\mathbf{\tilde{h}}}
\newcommand{\tz}{\tilde{z}}
\newcommand{\thh}{\tilde{h}}
\newcommand{\independence}{\perp\!\!\!\!\perp}  
\newcommand{\bsh}{\boldsymbol{h}}
\newcommand{\bfh}{\mathbf{h}}
\newcommand{\bfx}{\mathbf{x}}
\newcommand{\bfy}{\mathbf{y}}
\newcommand{\bfz}{\mathbf{z}}
\newcommand{\bfI}{\mathbf{I}}
\newcommand{\bfP}{\mathbf{P}}
\newcommand{\bfY}{\mathbf{Y}}
\newcommand{\sfT}{\mathsf{T}}
\newcommand{\bsupsilon}{\boldsymbol{\upsilon}}
\newcommand{\bstheta}{\boldsymbol{\theta}}
\newcommand{\bsLambda}{\boldsymbol{\Lambda}}
\newcommand{\bsSigma}{\boldsymbol{\Sigma}}
\newcommand{\bfzero}{\mathbf{0}}
\newcommand{\calN}{\mathcal{N}}
\newcommand{\bbR}{\mathbb{R}}
\begin{document}

\def\spacingset#1{\renewcommand{\baselinestretch}
{#1}\small\normalsize} \spacingset{1}

\if0\blind
{
  \title{\bf Graphical Transformation Models}
  \author[1,2,3]{Matthias Herp\thanks{
  The authors MA and MH were supported by the German Federal Ministry of Education and Research (BMBF) [01KD2209D]. \\
  The authors also thank Thomas Nagler, from the LMU in Munich, for fruitful discussions and feedback on identifying conditional independencies in vine copulas. \\
  Address questions to \texttt{matthias.herp@bioinf.med.uni-goettingen.de}}
}
\author[1]{Johannes Brachem}
\author[3]{Michael Altenbuchinger}
\author[1,2]{Thomas Kneib}

\renewcommand*{\Affilfont}{\small\normalfont}
\affil[1]{Chair of Statistics, University of Göttingen, Germany}
\affil[2]{Campus Institute Data Science,
University of Göttingen, Germany}
\affil[3]{Department of Medical Bioinformatics, University Medical Center Göttingen, Germany}
  \maketitle
} \fi

\bigskip
\begin{abstract}
Graphical Transformation Models (GTMs) are introduced as a novel approach to effectively model multivariate data with intricate marginals and complex dependency structures semiparametrically, while maintaining interpretability through the identification of varying conditional independencies. GTMs extend multivariate transformation models by replacing the Gaussian copula with a custom-designed multivariate transformation, offering two major advantages. Firstly, GTMs can capture more complex interdependencies using penalized splines, which also provide an efficient regularization scheme. Secondly, we demonstrate how to approximately regularize GTMs towards pairwise conditional independencies using a lasso penalty, akin to Gaussian graphical models. The model's robustness and effectiveness are validated through simulations, showcasing its ability to accurately learn complex dependencies and identify conditional independencies. Additionally, the model is applied to a benchmark astrophysics dataset, where the GTM demonstrates favorable performance compared to non-parametric vine copulas in learning complex multivariate distributions.
\end{abstract}

\noindent
{\it Keywords:} transformation models, normalizing flows, copulas, Gaussian graphical models, LASSO regularization

\vfill

\section{Introduction}\label{sec:introduction}

Multivariate models allow researchers to examine patterns in the joint behavior of multiple variables, particularly in the diverse -omics fields dealing with high-dimensional biological data, such as genomics and microbiomics. A common challenge with multivariate models is navigating the trade-offs between restricted models that offer interpretability advantages, such as Gaussian Graphical Models (GGMs) \citep{ggm_1_lauritzen, ggm_2_bishop}, and highly flexible models capable of capturing complex relationships, such as Normalizing Flows \citep{DeepMind_NF_Review, NF_Review_Kobyzev}.

GGMs formulate the problem as estimating a multivariate Gaussian distribution. A valuable feature of these models is that zero entries in the precision matrix of a Gaussian distribution can be interpreted as conditional independencies between the two corresponding variables given all other variables. The precision matrix entries can then be penalized, for instance, with the least absolute shrinkage operator (LASSO) to encourage sparse precision matrices \citep{Graph_Lasso, graph_lasso_Tibshirani, lasso}. This is crucial because, in many high-dimensional contexts, it is reasonable that any single variable can be predicted effectively based on a small number of other variables—gene expression analysis serves as a prime example \citep{Dobra2004-SparseGraphicalModels}. Other common applications of GGMs include network analysis in neuroimaging \citep[e.g.,][]{Rosa2015-SparseNetworkbasedModels} and exploring microbial interactions \citep[e.g.,][]{Kurtz2015-SparseCompositionallyRobust}.

Although the capability to uncover conditional independencies is key to the popularity of GGMs, it comes at the cost of limited flexibility: in a multivariate Gaussian model, all variable relationships are inherently assumed to be linear. This restriction hampers the model's ability to capture significant real-world patterns, which are frequently nonlinear. For example, \cite{Souto-Maior2023-NonlinearExpressionPatterns} demonstrated that nonlinear gene expression patterns and multiple shifts in gene network interactions underlie major changes in sleep duration.

Normalizing Flows are a deep learning technique firmly established in the machine learning community and represent the opposite end of the modeling spectrum, adopting a maximalist approach to flexibility. In a normalizing flow, a multivariate distribution is represented through a sequence—or \textit{flow}—of invertible transformations that jointly map the data into a reference space, often a standard Gaussian. With an appropriate configuration of layers, normalizing flows exhibit remarkable capabilities in representing complex, highly non-Gaussian multivariate distributions.
Normalizing flows have been applied to various tasks, including image generation \citep{Kingma2018-GlowGenerativeFlow}, molecular graph generation in drug discovery \citep{Shi2020-GraphAFFlowbasedAutoregressive}, and event simulation in particle physics \citep{Gao2020-EventGenerationNormalizing}. Although normalizing flows are excellent for generative tasks, their inherent flexibility results in a lack of interpretability compared to GGMs. Moreover, they often require large amounts of training data.

Between these two poles lies a rich body of literature on copula models, which enable researchers to specify separate marginal distributions for each dimension of the multivariate target variable. These collections of univariate margins are then used to transform all dimensions to a common scale, typically standard uniform, via the probability integral transformation. The copula subsequently characterizes the relationships among the standardized margins. A Gaussian copula with Gaussian marginal distributions is equivalent to a Gaussian graphical model, and by allowing the selection of different margins and/or copula functions, researchers achieve additional flexibility. Simple copula models are generally suitable for low-dimensional data but can be expanded to accommodate higher dimensions through vine copulas \citep[see][for a review]{review_vine_czado_nagler}. Diagnosing and penalizing conditional independence in vine copulas is a complex task. \cite{Vine_Lasso} have proposed a sparsity-inducing approach for vine copulas based on relations to structural equation models using LASSO penalties; however, this method still selects dependencies based on linearity assumptions.

In copula models, commonly parametric margins can be replaced with semiparametric margins to create Multivariate Conditional Transformation Models \citep[MCTMs;][]{MCTM}. In MCTMs, the univariate marginal distributions are estimated directly from the observed data through independent, invertible transformations to a reference distribution \citep[see also][]{CTM, MLT}. This approach renders the marginal distributions both flexible and easy to implement, as they do not need to be manually selected. The transformations and dependence parameters can be conditioned on covariates, which is why the model is termed \textit{conditional}. When combined with a Gaussian copula, as proposed by \cite{MCTM}, an MCTM can be interpreted as a Gaussian graphical model at the level of the transformed marginals, allowing for the straightforward discovery of conditional independencies, similar to the copula graphical models introduced by \cite{Dobra2011} and the nonparametric graphs proposed by \cite{wassermann_nonpar}. Nonetheless, although such an MCTM offers added flexibility in the margins, the dependence structure among the transformed margins remains constrained by linearity assumptions.

In this paper, we leverage the concept of a sequence of transformations to create a fully semiparametric graphical multivariate transformation model. This model facilitates complex dependencies and the discovery of conditional independence, with minimal overhead in model specification. Specifically, we offer the following contributions:
\begin{enumerate}
    \item We introduce a semiparametric dependence model as a replacement for the Gaussian copula in an MCTM with flexible margins. This model is inspired by the sequential transformation concept utilized in normalizing flows.
    \item We present two targeted penalization schemes that are applied concurrently. The first scheme provides regularization towards conditional independence using the group LASSO \citep{grouplasso1}. The second scheme offers general regularization to prevent overfitting in the semiparametric dependence model by balancing between a linear Gaussian copula and nonlinear dependence.
    \item We create dedicated metrics for diagnosing conditional independencies in the resulting multivariate model.
    \item We offer methods to interpret the learned nonlinear conditional dependencies by converting them into varying local conditional pseudo-correlations.
\end{enumerate}

The remainder of this paper is structured as follows: In \autoref{theory:mctm}, we briefly review Multivariate Conditional Transformation Models and Normalizing Flows. \autoref{theory:extension_goals} introduces our model, the penalization scheme, and the conditional independence metric. In \autoref{sec:simstudy_vine_copulas}, we demonstrate the viability of our approach through numerical simulations with data generated from various vine copula structures. \autoref{sec:complex_distribution} applies our model to the MAGIC (Major Atmospheric Gamma-ray Imaging Cherenkov) telescope dataset. Lastly, in \autoref{sec:conclusion}, we summarize our work and draw conclusions. The appendix provides additional details: notation in \autoref{app:notation}, the mathematical proof for full conditional independence in vine copulas in \autoref{app:prof_theorem_rvine_ci}, computational specifics in \autoref{app:comp}, further simulation results in \autoref{app:vine_sim}, and application results in \autoref{app:magic}. 
We implemented the model in Python, and the corresponding code is publicly accessible at \href{https://github.com/MatthiasHerp/gtm}{Github}.

\section{Multivariate Conditional Transformation Models} \label{theory:mctm}

\subsection{Model Setup}

Multivariate Conditional Transformation Models (MCTMs), proposed by \cite{MCTM}, are a multivariate extension of Conditional Transformation Models \citep[CTMs,][]{CTM} that combine a Gaussian Copula with marginal CTMs. The model can be written as a transformation $\bfh(\bfy) = \bfz$ of the response $\bfy$ into a multivariate standard Gaussian space:
\begin{equation} \label{eq:mctm_model}
    \bfz = \bfh(\bfy) =
    \mathbf{\Lambda} \bfth(\bfy) = 
    \begin{bmatrix} 
    1 & 0 & \hdots & 0  \\
    \lambda_{2,1} & 1 & \hdots & 0 \\
    \vdots & \ddots & \ddots & \vdots \\
    \lambda_{J,1} & \cdots & \lambda_{J,J-1} & 1
    \end{bmatrix} \\
    \begin{bmatrix}
    \thh_1(y_{1}) \\
    \thh_2(y_{2}) \\
    \vdots \\
    \thh_J(y_{J})
    \end{bmatrix} \\
    \sim \calN(\mathbf{0},\mathbf{I}),
\end{equation}
where $\lambda_{r,c} \in \bbR$ for all $r=2, \dots, J$ and $c=1, \dots, J-1$. The matrix $\mathbf{\Lambda}$ provides a bijective mapping by virtue of its lower triangular structure with unit diagonal.
All marginal transformation functions $\thh_j: \bbR \rightarrow \bbR$, $j = 1, \dots, J$, are strictly monotonically increasing in the respective $y_j$, thereby forming bijective mappings. 
In the existing literature, $\thh_j(y_{j})$ is typically parameterised using $P_j$ basis function evaluations $a_{p}(y_j)$ and corresponding parameters $\bsupsilon_j$, such that we can write
\begin{equation}
   \thh_{j}(y_{j}) = \mathbf{a}(y_{j})^T \boldsymbol{\upsilon}_{j} = \sum_{p=1}^{P_{j}} a_{p}(y_{j}) \upsilon_{j,p}.
\end{equation}
For the bases $a_{p}$, \cite{CTM} and \cite{MCTM} have used Bernstein polynomials, while \cite{bayesian_CTM} employed B-Splines.
To ensure a strictly monotonically increasing function, the constraint $\upsilon_{j,1} < \upsilon_{j,2} < \hdots < \upsilon_{j,P_j}$ is sufficient for both basis choices, as demonstrated for Bernstein polynomials by \cite{MLT} and for B-Splines by \cite{bayesian_CTM}.
To enforce this restriction without resorting to constraint optimisation, as proposed by \cite{MLT}, , we adopt the approach of \cite{WoodConstraint}. We optimize an unrestricted parameter vector $\boldsymbol{\theta}_j = (\theta_{j,1}, \theta_{j,2}, ... \theta_{j,P_j})^\sfT$ which is subsequently transformed into the restricted parameter vector $\boldsymbol{\upsilon}_j = (\upsilon_{j,1}, \upsilon_{j,2}, ... \upsilon_{j,P_j})^\sfT$. The transformation is defined as $\upsilon_{j,p} = \theta_{j,1} + \sum_{\tilde{p}=2}^p \exp(\theta_{j, \tilde p})$, such that $\theta_{j,2}, \dots \theta_{j,P_j}$ can be understood as log increments in $\upsilon_{j,1}, \dots, \upsilon_{j, P_j}$.
For B-spline bases, \cite{bayesian_CTM} prove that the resulting restricted transformation function is monotonically increasing, as detailed in Theorem 2.1.

MCTMs are termed conditional because both the marginal transformations $\thh_j(y|\bfx)$ and the joint dependency defining matrix $\mathbf{\Lambda}(\bfx)$ can be conditioned on a set of covariates $\bfx$.
In this paper, we omit the conditioning on covariates for the GTM, leaving this aspect for future research.
Furthermore, we refer to the collection of marginal transformations in $\bfth(\bfy)$ as the \textit{transformation layer}. Additionally, we label $\mathbf{\Lambda}$ as the \textit{decorrelation layer}, as it removes inter-variable correlations within the transformed space $\bfth(\bfy)$, resulting in the standard Gaussian latent space.
\subsection{Interpretation as a Gaussian Graphical Model} \label{theory:mctm_ggm}
An intriguing characteristic of the MCTM is its relationship to a Gaussian Graphical Model (GGM) with arbitrary marginals.
The critical point is that the intermediate latent $\mathbf{\tilde{h}}(\bfy) = \bftz$ can be understood as following a multivariate Gaussian distribution with covariance matrix $\bsSigma = \bsLambda^{-1}\bsLambda^{-\sfT}$, given that $\tilde \bfh(\bfy) = \bsLambda^{-1} \bfz$ where $\bfz \sim \calN(\bfzero, \bfI)$.
Since $\tilde \bfh(\bfy)$ consists of independent element-wise transformations $\tilde h_j(y_j)$, it does not capture dependence among the elements of $\bfy$. 
Consequently, dependence is exclusively modeled by the linear mapping $\mathbf{\Lambda}$ in the transformed latent space $\bftz$, allowing us to consider the MCTM as applying a GGM to $\bftz$.
For GGMs, the precision matrix $\bfP =\mathbf{\Sigma}^{-1} = \bsLambda^{\sfT}\bsLambda$ is crucial as its elements encode the conditional correlations between all variable pairs, given all other variables. In particular, if the element $\bfP_{[r,c]} = \bfP_{[r,c]} = 0$, this implies conditional independence of $\tz_r$ and $\tz_c$ given all other elements in $\bftz$.
In turn, as the marginal transformations do not capture any interdependence, $\bfP_{[r,c]} = \bfP_{[r,c]} = 0$ also implies that $y_r$ and $y_c$ given all other elements in $\bfy$.
Based on the pairwise conditional independencies evident in $\bfP$, an undirected graphical model can be created, with edges representing the strength of the full conditional dependencies between nodes representing the variables. 
\subsection{Likelihood-based Inference} \label{theory:mctm_inference}
\cite{MCTM} follow \cite{MLT} in using maximum likelihood to estimate the model parameters. 
The maximum likelihood inference hinges on the tractable log-density of the data $\log f(\bfy)$. Due to the bijective setup of $\bsLambda \tilde \bfh(\bfy)$, $\log f(\bfy)$ can be obtained through an application of the change of variables theorem, and is given by
\begin{equation} \label{eq:transformation_likelihood}
\log f(\bfy)
=
\log \phi\bigl( \bsLambda \tilde{\mathbf{h}}(\bfy)\bigr) + \sum_{j=1}^{J} \log \left|\frac{\partial \tilde{h}_j(y_{j} | \bstheta_j)}{\partial y_{j}} \right|,
\end{equation}
where $\phi$ denotes the multivariate standard Gaussian density. 
The estimands in \eqref{eq:transformation_likelihood} are the non-fixed parameters in $\bsLambda$ and the parameter vectors $\bstheta_1, \dots, \bstheta_J$.
Due to its lower triangular structure and unit diagonal, the Jacobian determinant of $\bsLambda$ simplifies to one and can be omitted in \eqref{eq:transformation_likelihood}, leaving only the univariate derivatives of the transformation layer to consider. The log-likelihood for a sample of size $N$ arranged in a matrix as $\bfY = [\bfy_1, \dots, \bfy_N]$ is then expressed as $\ell(\bfY) = \sum_{i=1}^N \log f(\bfy_i)$. We obtain parameter estimates by maximizing $\ell(\bfY)$ with respect to the estimands. 
\subsection{Synthetic Sampling} \label{theory:mctm_synthetic_samples}
Generating synthetic samples from an MCTM is straightforward due to the invertibility of both the transformation layer and the decorrelation layer. To generate a new sample, draw $\bfz^* \sim \calN(\bfzero, \bfI)$ from a standard Gaussian distribution and apply the inverse mapping, yielding a new sample $\bfy^* = \tilde \bfh^{-1}(\bsLambda^{-1}\bfz^*)$ from the target distribution. The inverse $\tilde \bfh^{-1}$ can be obtained by numerically inverting the independent transformation functions constituting $\tilde \bfh$ as detailed in the appendix \autoref{alg:inverse_transformation}.
\section{Graphical Transformation Model} \label{theory:extension_goals}
As stated in \autoref{sec:introduction}, the principal objective of our work is to improve the MCTM's ability to represent complex dependencies among the dimensions of $\bfy$. Essentially, we aspire to surpass the limitations of the Gaussian Copula while maintaining a tractable likelihood and a graphical interpretation, emphasizing the encoding of conditional independencies in $f(Y)$.
Normalizing flows are central to our extension, so we will provide a brief introduction to them in \autoref{theory:normalizing_flow}. Subsequently, we will introduce our extension in \autoref{theory:gtm}.
As part of this development, we explore penalization schemes in \autoref{theory:penalties}, along with defining an exact metric for conditional independence in \autoref{theory:ci_metric}.
\subsection{Normalizing Flows} \label{theory:normalizing_flow}
Normalizing flows, as introduced by \cite{Nice_Dinh_2015} in the machine learning literature, are probabilistic models designed to learn intricate distributions in high-dimensional spaces. They achieve this by representing a complex target distribution as a sequence of simple bijective transformations to a standard Gaussian space.
From a statistical point of view, normalizing flows can be seen as transformation models. 
The primary distinction in the application of normalizing flows is their use of a series of transformations, $g_1, g_2, \ldots, g_L$, applied sequentially. The resulting forward pass of the joint transformation in a normalizing flow model is expressed as:
\begin{equation} \label{eq:normalizing_flow}
    \mathbf{z} = \bigl(g_L \circ g_{L-1} \circ \cdots \circ g_2 \circ g_1\bigr)(\bfy) \sim \mathcal{N}(\mathbf{0},\mathbf{I}).
\end{equation}
As in transformation models, each layer $g_l$ for $l =1, 2, \ldots, L$ must be a differentiable, bijective function to ensure a tractable likelihood.
Similar to transformation models, normalizing flows allow for recovering the original variable $\bfy$ by applying the composition of inverse transformations in reverse order to the standard Gaussian vector $\bfz$: $\bfy= (g_1^{-1} \circ \cdots g_L^{-1})(\bfz)$. Conversely, drawing a new $\bfz^*$ enables the generation of synthetic samples.
The key aspect of setting up a normalizing flows model lies in defining the mapping functions $g_1,g_2,...,g_L$. \cite{NF_Review_Kobyzev, DeepMind_NF_Review} offer comprehensive reviews of normalizing flows, detailing their applications and the various types of mapping functions, referred to as flows or layers.

In the terminology of normalizing flows, the decorrelation layer of the MCTM can be viewed as a type of \textit{coupling layer}.
Generally, coupling layers are defined as in Equation \eqref{eq:generalized_coupling_layer}, with the split input $\tz=(\tz_A,\tz_B)$ and output $z=(z_A,z_B)$, a coupling function $g$, and a conditioner $c$.
\begin{equation} \label{eq:generalized_coupling_layer} 
\begin{aligned}
z_A &= \tz_A \\
z_B &= g(\tz_B; c(\tz_A))
\end{aligned}
\end{equation}
The only requirement is that $g$ must be invertible given $c(\tz_A)$.
The decorrelation layer of the MCTM is a coupling layer with $g(\tz_B; c(\tz_A)) = \tz_B + c(\tz_A)$ and $c(\tz_A)= \lambda_{\tz_B, \tz_A} \cdot \tz_A$ in the bivariate case. 
For higher dimensions, the MCTM essentially acts as a coupling layer with the maximum number of splits along the data dimension, rather than a split in two, which is typical for normalizing flows.
This type of flow has also been introduced as Masked Autoregressive Flows \cite[MAF,][]{MaskedAutoregressiveFlow}.
\subsection{Graphical Transformation Model} \label{theory:gtm}

From normalizing flows, we adopt two concepts.
First, we redefine the $\lambda_{r,c}$ entries of $\mathbf{\Lambda}$ as functions rather than constants, making them dependent on the variable they multiply, resulting in $\lambda_{r,c}(\tz_c)$. 
This forms an additive coupling flow, as described in Equation \eqref{eq:generalized_coupling_layer} with the conditioner $c_{r,c}(\tz_c) = \lambda_{r,c}(\tz_c) \cdot \tz_c$. 
Since we remain within the class of coupling layers, the resulting decorrelation layer is a valid flow, thus a bijective function that meets the requirements for maximum likelihood estimation.
A convenient property, as noted by \cite{RNVP_Dinh_2017}, is that the conditioner $c_{r,c}(\tz_c)$ and hence $\lambda_{r,c}(\tz_c)$ need not be invertible nor is it necessary to compute their derivative in likelihood inference, due to the triangular matrix design of $\mathbf{\Lambda}$. 
Thus, we have full flexibility in choosing the functional form of the conditioner. 
We choose to use a spline as they are 
highly flexible, simple to evaluate and penalize to avoid overfitting.
In particular, we use a B-Spline. 
The second concept borrowed from normalizing flows is the idea of creating a sequence of alternating layers to enhance model flexibility. 
An alternating pattern can be created by applying every second decorrelation layer to a flipped input.
This is fundamental in normalizing flows, ensuring that variable dependencies can be in any direction \citep{RNVP_Dinh_2017}.
To flip the inputs we use the exchange matrix $\mathbf{F}$ with ones on its anti-diagonal. It is 
defined as $\mathbf{F}_{r,c} = 1$ if $r + c = J + 1$ and $\mathbf{F}_{r,c} = 0$ otherwise.

By combining these two concepts, we attain Equation \eqref{eq:decorr_layer_nf} and Equation \eqref{eq:decorr_layers_nf_fct_def}, which define the sequence of functions applied in Equation \eqref{eq:normalizing_flow}. 
The former defines a single coupling decorrelation layer $\mathbf{\Lambda}_l$.
The latter utilizes $\mathbf{\Lambda}_l$ and $\mathbf{F}$ to create the complete dependency structure of the model as a sequence of these layers, assuming $L$ is an even number, thus including flipping in the last layer.
\begin{equation}
\label{eq:decorr_layer_nf}
    \mathbf{\Lambda}_l(\bftz_{l-1}) = \begin{bmatrix} 
  1 & 0 & \hdots & 0   \\
  \lambda_{2,1,l}(\bftz_{1,l-1}) & 1 & \hdots & 0 \\
  \vdots & \ddots & \ddots & \vdots \\
  \lambda_{J,1,l}(\bftz_{1,l-1}) & \lambda_{J,2,l}(\bftz_{2,l-1}) &  \hdots & 1 \\
  \end{bmatrix}
\end{equation}
\begin{equation} 
\label{eq:decorr_layers_nf_fct_def}
    \mathbf{h}_l(\bftz_{l-1}) =
    \begin{cases}
    \mathbf{F} \mathbf{\Lambda}_l( \mathbf{F} \bftz_{l-1})  \mathbf{F} \bftz_{l-1} & 
    \quad l \mod 2 = 0 \\
    \mathbf{\Lambda}_l(\bftz_{l-1}) \bftz_{l-1} & 
    \quad \textrm{otherwise}
    \end{cases}
\end{equation}
Each nonlinear decorrelation layer's inversion, akin to any coupling layer, can be calculated iteratively by solving the equations from top to bottom, as outlined in \autoref{alg:inverse_decorrelation}. 
According to \cite{RNVP_Dinh_2017}, a minimum of three layers is recommended, as this number allows each variable to affect all others, conditioned on all others, effectively mitigating the influence of variable ordering. In practice, the number of layers is linked to training efficiency, with three being the lower limit for a hyperparameter that can be increased for the model to better approximate complex distributions.

A crucial factor in choosing this type of flow is its preservation of the MCTM decorrelation layer's structure. 
Despite the matrix entries varying, each layer consists of a linear transformation $\mathbf{\Lambda}(\bftz)$ given the marginally transformed data $\mathbf{\bfy}$.
This becomes apparent by Equation \eqref{eq:lambda_matrix_normalizing_flow}, which defines the $\mathbf{\Lambda}$ matrix of the Equation \eqref{eq:decorr_layers_nf_fct_def}:
\begin{equation} 
\label{eq:lambda_matrix_normalizing_flow}
    \mathbf{\Lambda}(\bftz) =
    \prod_{l=L}^1
    \begin{cases}
    \mathbf{F} \mathbf{\Lambda}_l( \mathbf{F} \bftz_{l-1}) \mathbf{F} & 
    \quad l \mod 2 = 0 \\
    \mathbf{\Lambda}_l(\bftz_{l-1}) & 
    \quad \textrm{otherwise}
    \end{cases}
\end{equation}
with $\prod_{l = L}^{1} \mathbf{\Lambda}_l := \mathbf{\Lambda}_L \mathbf{\Lambda}_{L-1} \cdots \mathbf{\Lambda}_1$.
For example, when $L=3$, the matrix is given by $\mathbf{\Lambda}(\mathbf{\bftz}) =
\mathbf{\Lambda}_3(\bftz_{2})
\mathbf{F} \mathbf{\Lambda}_2( \mathbf{F} \bftz_{1}) \mathbf{F}
\mathbf{\Lambda}_1(\bftz)$.
Although the intermediate latent spaces $\bftz_1, \bftz_2$ serve as inputs for functions, by iteratively substituting definitions of prior flow layers, all functions can be expressed as a nested function of the transformed latent space $\bftz$ to derive:
$$\mathbf{\Lambda}(\bftz) =
\mathbf{\Lambda}_3(
\mathbf{F} \mathbf{\Lambda}_2( \mathbf{F} \mathbf{\Lambda}_1(\bftz) \bftz) \mathbf{F} \mathbf{\Lambda}_1(\mathbf{\bftz}) \bftz )
\mathbf{F} \mathbf{\Lambda}_2( \mathbf{F} \mathbf{\Lambda}_1(\bftz) \bftz) \mathbf{F}
\mathbf{\Lambda}_1(\bftz)
$$ in the $L=3$ case.
Thus, given $\bftz$, all triangular matrices of the coupling layers, and hence the joint conditional linear transformation $\mathbf{\Lambda}(\bftz)$, can be computed.

In turn we can then compute $\bfP(\bftz) = \bsLambda(\bftz)^{\sfT}\bsLambda(\bftz)$.
Since $\bfP(\bftz)$ depends on $\bftz$, the GTM no longer models a Gaussian Copula.
To differentiate $\bfP(\bftz)$, from the Gaussian MCTM, we call it the local pseudo-precision matrix.
local because it reflects the dependence structure at a specific point $\bftz$, and pseudo because it is not a precision matrix in the Gaussian sense.
Its off-diagonal $p_{r,c,n}$ do not just vary across pairs $r,c$ but also across observations $n$ and potentially depend on all dimensions of $\bftz$ and therefore $\bfy_n$. 
In other words, for two different observations $\mathbf{y}_1$ and $\mathbf{y}_2$, the local pseudo precision matrix entries are not equal, i.e., $p_{r,c,1} \neq p_{r,c,2}$, which implies that correlations can also differ, $\rho_{r,c,1} \neq \rho_{r,c,2}$. Hence, we refer to $\rho_{r,c,n}$ as local conditional pseudo-correlations.
The $\rho_{r,c,n}$ offer great interpretational advantages, as they can assist in understanding nonlinear and even non monotonic conditional dependencies, as demonstrated in \autoref{fig:sign_conditional_corr_h} of the application in \autoref{sec:complex_distribution}
Furthermore, both the $\rho_{r,c,n}$ and $p_{r,c,n}$ are also closely linked to the conditional independencies, as we discuss in both in the application and the simulation study in \autoref{sec:simstudy_vine_copulas}.
Therefore, we can leverage the $p_{r,c,n}$ to construct an approximate conditional independence penalty, which we will elaborate on in \autoref{theory:penalties}.

\subsection{Penalization Scheme} \label{theory:penalties}
The modelling choices are essential for the interpretation of our two penalization schemes. 
First, as is common practice in the statistical literature for models employing splines, we apply ridge penalties to the derivatives of every spline in each decorrelation layer, namely $\lambda_{r,c,l}$ with there respective parameter vector $\mathbf{\theta}_{l,r,c}$ (\cite{pspline}). 
We penalize both the first and second derivatives.
The penalty term, that is to be added to the log-likelihood 
and includes hyperparameters $\tau_{1}$ and $\tau_{2}$, is given by:
\begin{equation}
    \text{\footnotesize Spline-Penalty} = 
     \tau_{1} \mathbf{1}^T \left( \mathbf{D}_1 \mathbf{\theta} \right)^2  + 
     \tau_{2}  \mathbf{1}^T \left( \mathbf{D}_2 \mathbf{\theta} \right)^2 
\end{equation}
where $\mathbf{D}_1$ and $\mathbf{D}_2$ are first and second order differencing matrices. 
For clarity and brevity, , we suppress indices that signify summing the penalty across all splines $\lambda_{r,c,l}$ in each layer $\mathbf{\Lambda}_l$.
If $\tau_{1} \rightarrow \inf$, any differences in $\mathbf{\theta}_{l,r,c}$ are heavily penalized, causing each $\lambda_{r,c,l}$ to reduce to a constant.
Consequently, every $\mathbf{\Lambda}_l$ becomes a linear transformation, and thus the product of all layers $\mathbf{\Lambda}$ results in an overall linear transformation.
Therefore, $\tau_{1}$ regulates how much the normalizing flow is regularized towards the baseline linear MCTM.
In this manner $\tau_{1}$ mediates between extreme nonlinearity and the Gaussian copula.
The roles of $\tau_{2}$ is simply to smooth the splines.
Regarding the splines in the transformation layer, when employing a sparse basis, an additional penalty is unnecessary. Furthermore, a first derivative penalty is unwarranted since the splines are required to be monotonically increasing. For complex marginal data, such as the MAGIC data analyzed in \autoref{sec:complex_distribution}, a large basis may be considered in conjunction with a ridge penalty on the second derivatives for smoothing. In these models, we introduce the penalty hyperparameter $\tau_{4}$ and incorporate the spline penalty into the penalized likelihood accordingly.
Building on penalization towards a Gaussian copula, we apply a second penalization scheme derived from Gaussian Graphical Models (GGM).
Following \cite{Graph_Lasso} and \cite{graph_lasso_Tibshirani}, we employ a LASSO \citep{lasso} penalization to the off-diagonal entries of the precision matrix.
In this context, we apply the penalty to the local pseudo-precision matrix $\bfP(\bftz)$ to approximately encourage a sparse conditional independence structure.
If $\tau_{1}$ does not constrain the model to a Gaussian Copula, $\bfP(\bftz)$ of the GTM is dependent on $\bfy_n$. 
Therefore, a simple LASSO penalization does not necessarily ensure sparseness in terms of conditional independence relationships across all observations $n$ within a variable pair, thereby concentrating zeros in specific $p_{r,c,-}$ entries.

Instead we utilize a group LASSO (\cite{grouplasso1, grouplasso2, grouplasso3, grouplasso5}) penalty, which is equivalent to applying the second norm across observations $n$ for each pair $r,c$:
\begin{equation} 
\label{eq:penalty_precision_matrix}
    \text{\footnotesize LASSO-Penalty} =  \tau_3 \sum_{r \neq c, r > c} \left(\sum_{n=1}^N p_{r,c,n}^2\right)^{0.5}
\end{equation}
Implementing this form of penalization encourages the emergence of a sparse GTM in the sense of concentrating zeros in certain $p_{r,c,-}$ entries.
However, this penalization scheme only leads to conditional independence if the nonlinearity penalty $\tau_1$ confines the model to a Gaussian Copula. 
For the nonlinear scenario, the penalty serves merely as an approximation of a conditional independence penalty. 
A detailed explanation is provided in \autoref{appendix:challenge_independence_identification}.

As an alternative to the standard LASSO penalty, we implement an adaptive LASSO penalty following \cite{adaptive_lasso}. This involves first training the GTM without any LASSO penalty to establish weights $w_{r,c} = \frac{1}{N}\sum_{n=1}^N |p_{r,c,n}|$ based on $\bfP(\bftz)$. 
In the subsequent step, we retrain the model with the adaptive LASSO penalty:
\begin{equation} 
\label{eq:adaptive_penalty_precision_matrix}
    \text{\footnotesize Adaptive-LASSO-Penalty} =  \tau_3 \sum_{r \neq c, r > c} \frac{1}{w_{r,c}} \left(\sum_{n=1}^N p_{r,c,n}^2\right)^{0.5}
\end{equation}
The adaptive LASSO's rationale is that by weighting the penalty, smaller pairwise average local pseudo-precision matrix entries are penalized more heavily than larger ones. 
This approach may enhance the focus of the penalty on identifying conditional independencies while reducing bias with respect to conditionally dependent pairs.

\subsection{Conditional Independence Metric} \label{theory:ci_metric}
\newcommand{\fuv}{f(\bfy_{u, v} \mid \bfy_{- u, v})}
\newcommand{\fuvind}{f_{u \perp v}(\bfy_{u, v} \mid \bfy_{- u, v})}
\newcommand{\fuvs}{f(\bfy_{u, v}^s \mid \bfy_{- u, v}^s)}
\newcommand{\fuvinds}{f_{u \perp v}(\bfy_{u, v}^s \mid \bfy_{- u, v}^s)}
As the conditional independence in the GTM cannot be simply defined by zero entries in $\bfP(\bftz)$, we propose to evaluate it based on a likelihood ratio.
This method provides precise metrics for assessing conditional independence in the distribution defined by the GTM.
The rationale begins with the definition of conditional independence: two random variables $y_1$ and $y_3$ are considered conditionally independent given a third variable $y_2$—denoted as $y_1 \perp y_3 \mid y_2$—if and only if their joint conditional density $f(y_1,y_3|y_2)$ can be factored into the product of marginal conditional densities, $f(y_1, y_3|y_2) = f(y_1|y_2)f(y_3|y_2)$. Thus, we can measure the closeness of a distribution to conditional independence by comparing $f(y_1, y_3 \mid y_2)$ to $f_{\perp}(y_1, y_3 \mid y_2) = f(y_1|y_2) f(y_3|y_2)$. The closer the ratio of the two densities is to one, or equivalently, the nearer their log differences are to zero, the closer the distribution defined by $f(\bfy)$ is to conditional independence $y_1 \perp y_3 \mid y_2$.

We apply this principle to our $J$-dimensional GTM and propose calculating two common likelihood ratio-based measures: the Kullback-Leibler Divergence ($\kld$) and the normalized Integrated Absolute Error ($\iae$). 
In information theory, the $\kld$ is understood as quantifying the information loss when using the GTM with the independence assumption instead of the learned GTM. 
The $\kld$ does not have an upper limit, so although it is effective for ranking pairs by proximity to conditional independence, it is challenging to interpret in absolute terms—determining which values between $f$ and $f_{\perp}$ indicate evidence of conditional independence versus those indicating conditional dependence.
As a complementary measure, we also compute the Integrated Absolute Error ($\iae$), generally defined for two densities as $
\iae(f_1,f_2) = \frac{1}{2} \int | f_1(y) - f_2(y) | \mathrm d y
$. 
The normalized $\iae$ falls within the range $[0,1]$ after normalization by 2, which can be interpreted as the percentage of non-overlapping probability mass between the two densities.
Thus, it represents the percentage error in probability mass when approximating the learned distribution under the assumption of conditional independence. 
Applied researchers can determine a suitable threshold below which pairs can be considered conditionally independent, based on the context of their application.

To elaborate on approximating the $\kld$ and the $\iae$ via sampling, we first define the model density under the full conditional independence assumption between two elements of $\bfy$ indexed by $u$ and $v$, $u \neq v$, given $\bfy_{- u, v}$, the subset of all elements of $\bfy$ indexed by  $\{1, 2, ..., J\} \setminus \{u, v\}$ as $\fuvind = 
f(y_u \mid \mathbf{\bfy}_{- u, v}) 
\times
f(y_v \mid \mathbf{\bfy}_{- u, v})$.
We compare this distribution to the joint density of $u$ and $v$ given $\bfy_{- u, v}$ directly implied by the model, $\fuv$.
We first define the $\kld$ and $\iae$ metrics for a fixed conditioning set, termed as the \textit{local} versions of these metrics:
\begin{align*}
\kld^{\text{local}}_{f, f_\perp | \bfy_{- u, v}} &= 
\int 
\fuv \times
\log \left[ 
\frac{\fuv}{\fuvind}
\right] \,
d\bfy_{u,v} \\
\iae_{u,v | - u, v}^{\text{local}} &= 
\frac{1}{2}
\int 
|\fuv - \fuvind| \,
d\bfy_{u,v}
\end{align*}
The $\kld_{u,v | - u, v}^{\text{local}}$ and $\iae_{u,v | - u, v}^{\text{local}}$only measure the conditional independence given a fixed conditioning set $\bfy_{- u, v}$.
To generalize them over all possible conditioning set values, we take the expectation across these sets:
\begin{align*}
\kld_{u,v | - u, v} &= 
\int 
f(\bfy_{- u, v})
\left(
\int 
\fuv \times
\log \left[ 
\frac{\fuv}{\fuvind}
\right] \,
d\bfy_{u,v}\right) \,
d\bfy_{- u, v}\\
\iae_{u,v | - u, v} &=
\frac{1}{2}
\int 
f(\bfy_{- u, v})
\left(
\int 
|\fuv - \fuvind| \,
d\bfy_{u,v} \right) \,
d\bfy_{- u, v}
\end{align*}
To approximate these integrals, we generate $S$ random draws $(\bfy^1, \bfy^2, \ldots, \bfy^S)$ from our learned distribution $f(\bfy)$, where each $\bfy^s \in \bbR^J$ represents one $J$-dimensional realization of the estimated distribution of the multivariate random variable $Y$ for $s=1, \ldots, S$. We then evaluate $\fuvs$ and $\fuvinds$, averaging the metrics across samples:
\begin{align*}
\kld_{u,v | - u, v}& = 
\frac{1}{S}
\sum_{s=1}^S
\frac{
f(\bfy_{- u, v}^s)
\fuvs
}{f(\bfy^s)}
\times
\log \left[ 
\frac{\fuvs}{\fuvinds}
\right] \\
&= 
\frac{1}{S}
\sum_{s=1}^S
\log \left[ 
\frac{\fuvs}{\fuvinds}
\right] \\
\iae_{u,v | - u, v} &=
\frac{1}{2 S}
\sum_{s=1}^S
\frac{
f(\bfy_{- u, v}^s)
}{f(\bfy^s)}
\times
|\fuvs - \fuvinds| 
\end{align*}
Here, the $\kld_{u,v | - u, v}$ simplifies to a log-likelihood ratio across samples while the $\iae_{u,v | - u, v}$ requires weighting by $f(\bfy_{- u, v}^s)/f(\bfy^s)$ to account for the conditioning set and sampling probabilities. 
The $\iae_{u,v | - u, v}$ provides dual interpretations: it represents the expected percentage error in probability mass across all conditioning sets, when assuming conditional independence for a given pair, or it can be seen as the integrated absolute error between the full distribution with and without the conditional independence assumption for the pair: $\iae_{u,v | - u, v} = \iae(f(\bfy), f_{u \independence v | - u,v}(\bfy))$. 
A particularly appealing feature of this approach is the ability to compute approximations for both metrics from the same set of random draws from the fitted model, thereby saving computational resources.

In Algorithm \ref{alg:llr_metrics}, we provide additional details on the computations. All calculations are performed using the estimated joint probability density $\hat f(\bfy)$, substituting in the maximum likelihood estimates for all model parameters; however, to simplify notation, we continue to use $f(\bfy)$. Since we cannot analytically compute the one- and two-dimensional integrals necessary to obtain the conditional densities, we approximate them using Gauss-Legendre Quadrature, denoted by the function $\text{GLQ}$. 
Additionally, it should be noted that the metrics can be computed in either the observed space of $Y$ or the latent space of $\tilde{Z}_0$. This is feasible because the marginal transformations $\tilde Z_0 = \bfh(Y)$ do not capture dependencies, implying that conditional independence among elements of $\tilde Z_0$—and consequently in $f(\tilde Z)$—is equivalent to conditional independence among elements of $Y$ in $f(Y)$. 
Calculating the metrics in $\tilde{Z}_0$ can offer the benefit of requiring fewer knots for numerical integration, leading to greater precision and numerical stability, especially in complex marginal models.
\begin{algorithm}
\caption{Approximate Likelihood Ratio Based Metrics for Full Conditional Independence}
\label{alg:llr_metrics}
\textbf{Input:} Samples $\begin{bmatrix} \bfy^1 & \bfy^2 & \dots & \bfy^S \end{bmatrix}^T$ from the trained GTM with the probability density $f(\bfy)$.\\
\textbf{Output:} $\iae_{u,v}$ $\kld_{u,v}$ for all pairs $(u,v)$.
\begin{algorithmic}
\FORALL {Samples $\bfy^s$ for $s \in [1,2,...,S]$}
    \STATE - Compute the likelihood  $f(\bfy^s)$ 
    \FORALL {pairs $(u,v) \in \{1, 2, \dots, J\} \times \{1, 2, \dots, J\}, u \neq v$}
        \STATE - Compute the marginal likelihood of the conditioning set for each sample $s =1,...,S$:
        \[
        f(\bfy_{- u, v}^s) \stackrel{\text{GLQ}}{\approx}
        \iint f(\bfy^s) \, \mathrm dy_u \, \mathrm dy_v
        \]
        \STATE - Compute the joint conditional density implied by the fitted model for each sample:
        \[
        f(\bfy_{\{u, v\}}^s \mid \bfy_{- u, v}^s) = 
        \frac{f(\bfy^s)}{f(\bfy_{- u, v}^s)}.
        \]
        \STATE - Integrate out variable $u$ to compute the marginal conditional density of $v$ without conditioning on $u$:
        \[
        f(\bfy_{-u}^s) \stackrel{\text{GLQ}}{\approx} \int f(\bfy^s) \, dy_u.  \quad \quad \quad
        f(y_{v}^s \mid \bfy_{- u, v}^s) =  \frac{f(\bfy_{-u}^s)}{f(\bfy_{- u, v}^s)}
        \]
        \STATE - Integrate out variable $v$ to compute the marginal conditional density of $u$ without conditioning on $v$:
        \[
        f(\bfy_{-v}^s) \stackrel{\text{GLQ}}{\approx} \int f(\bfy^s) \, dy_v. \quad \quad \quad
        f(y_{u}^s \mid \bfy_{- u, v}^s) =  \frac{f(\bfy_{-v}^s)}{f(\bfy_{- u, v}^s)}
        \]
        \STATE - Compute the conditional dependence structure under the independence assumption:
        \[
        \fuvinds = 
        f(y_{u}^s \mid \bfy_{- u, v}^s) 
        \times
        f(y_{v}^s \mid \bfy_{- u, v}^s)
        \]
        \ENDFOR
    \ENDFOR
\STATE - Compute Likelihood Ratio Metrics:
\[
\kld_{u,v | - u, v} \approx 
\frac{1}{S}
\sum_{s=1}^S
\log \left[ 
\frac{\fuvs}{\fuvinds}
\right]
\]
\[
\iae_{u,v | - u, v} \approx 
\frac{1}{2 S}
\sum_{s=1}^S
\frac{
f(\bfy_{- u, v}^s)
}{f(\bfy^s)}
\times
|\fuvs - \fuvinds| 
\]
\end{algorithmic}
\end{algorithm}

\section{Simulation Studies} \label{sec:simstudy_vine_copulas}
\subsection{Data Generation and Model} 

To test the GTM, we create various data-generation scenarios using vine copulas, aiming to evaluate the GTM's effectiveness in learning the underlying data-generating distribution and identifying conditional independencies.
Vine copulas are a powerful tool for data generation, as they enable the modeling of complex nonlinear dependence structures hierarchically using pairwise dependencies. Due to this property, vine copula structures are often referred to as pair copula constructions (PCC).
Moreover, the R software implementation \cite{r_vinecopula_package} facilitates efficient data generation, likelihood computation, and even the sampling of random pair copula constructions.
However, a limitation of vine copulas in our application is that full conditional independencies are not immediately apparent from the PCC, since it is not defined in terms of full conditionals. Nevertheless, as stated formally in \autoref{theo:rvine_ci}, full conditional independencies can be identified in a vine copula, with proof is provided in \autoref{app:prof_theorem_rvine_ci} along with further details in \autoref{app:vine_copula}.
\begin{theorem} \label{theo:rvine_ci}
Given an arbitrary R-vine structure of $J$ dimensions, for a random variable $\mathbf{Y}$, with trees $[T_1,T_2,...T_j,T_{j+1},...,T_{J-1}]$, that conforms to the simplifying assumption.
If all pair-copulas in trees $[T_j,T_{j+1},...,T_{J-1}]$ are independence copulas,
then each pair of variables $Y_u,Y_v$ with $u,v \in [1,..,J]$ that has its pair copula within the trees $[T_j,T_{j+1},...,T_{J-1}]$ is fully conditionally independent $Y_u \independence Y_v | \mathbf{Y}_{-u,v}$.
\end{theorem}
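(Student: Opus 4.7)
The plan is to work with the standard vine density decomposition
\begin{equation*}
f(y_1,\dots,y_J)=\prod_{i=1}^J f_i(y_i)\prod_{m=1}^{J-1}\prod_{e\in E_m}c_{j(e),k(e)\mid D(e)}\bigl(F(y_{j(e)}\mid y_{D(e)}),\,F(y_{k(e)}\mid y_{D(e)})\bigr),
\end{equation*}
and to show that, under the hypothesis, it factors as $f(\bfy)=g(y_u,\bfy_{-u,v})\,h(y_v,\bfy_{-u,v})$, which is equivalent to $Y_u\independence Y_v\mid \bfY_{-u,v}$. The simplifying assumption makes each pair copula a fixed bivariate function, so every factor $c_e(\cdot)$ depends only on variables in its constraint set $U_e=\{j(e),k(e)\}\cup D(e)$. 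Under the hypothesis every pair copula at levels $T_j,\dots,T_{J-1}$ equals one, so those factors vanish and the reduced density retains only factors from trees $T_1,\dots,T_{j-1}$.

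The crux is the following structural claim about R-vines: if an edge $e^*$ has conditioned pair $\{u,v\}$ and lies in tree $T_k$, then no edge $e$ at any level $m<k$ satisfies $\{u,v\}\subseteq U_e$. I would prove this by induction on $m$. At $m=1$ the constraint set equals the conditioned pair, so $\{u,v\}\subseteq U_e$ forces $e$ to be the unique level-$1$ edge with conditioned pair $\{u,v\}$, hence $e=e^*\in T_1$, contradicting $k\geq 2$; the boundary case $k=1$ only arises when $j=1$, in which case all pair copulas are independence copulas and the density reduces to $\prod_i f_i(y_i)$, trivially giving the claim. For the inductive step, the R-vine construction expresses $e$ at level $m$ as the combination of two level-$(m-1)$ edges $a,b$ with $U_e=U_a\cup U_b$ and, by the proximity condition, $|U_a\cap U_b|=m-1$. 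If both $u,v\in U_a$ (or both in $U_b$) the inductive hypothesis is violated at level $m-1$; otherwise WLOG $u\in U_a\setminus U_b$ and $v\in U_b\setminus U_a$, and then $|U_a\triangle U_b|=2m-2(m-1)=2$ forces $U_a\triangle U_b=\{u,v\}$. But $U_a\triangle U_b$ is precisely the conditioned pair of $e$, and by the bijection between the edges of an R-vine and the unordered pairs of variables via the conditioned-pair map, $e=e^*$, contradicting $m<k$.

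Given the structural claim, every surviving factor in the reduced density depends on variables in some $U_e$ not containing both $u$ and $v$, so each factor involves $y_u$ but not $y_v$, or $y_v$ but not $y_u$, or neither. Placing $f_u(y_u)$ and $f_v(y_v)$ into the corresponding groups and distributing the remaining factors arbitrarily yields $f(\bfy)=g(y_u,\bfy_{-u,v})\,h(y_v,\bfy_{-u,v})$; integrating out $y_u$ and $y_v$ and dividing then produces $f(y_u,y_v\mid \bfy_{-u,v})=f(y_u\mid \bfy_{-u,v})\,f(y_v\mid \bfy_{-u,v})$, i.e., $Y_u\independence Y_v\mid \bfY_{-u,v}$. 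The main obstacle is the structural claim: it hinges on the proximity condition combined with the standard fact that the $\binom{J}{2}$ edges of an R-vine are in bijection with the unordered pairs of variables via the conditioned pair, so that the conditioned pair of an edge identifies it uniquely. Once that combinatorial bookkeeping is in place, the simplifying assumption makes the factorization and the conditional-independence conclusion routine.
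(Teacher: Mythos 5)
Your proof is correct, and while it rests on the same two pillars as the paper's --- the truncated R-vine density decomposition plus a combinatorial lemma saying that no tree below the level of the edge with conditioned pair $\{u,v\}$ has both $u$ and $v$ in its constraint set --- it executes both pillars differently. For the analytic part, you use the fact that the independence \emph{copula density} is identically $1$, so those factors simply drop out of the product; the paper instead states (in its preliminary (P1)) that the independence copula density equals the product of the conditional CDF arguments --- which is actually the copula CDF, not its density --- and is then forced into a case analysis that recursively unwinds the conditional CDFs through lower trees (its (P2) and cases (iii)--(iv) of Proposition \ref{prop1}). Your observation short-circuits all of that and is the cleaner route. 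For the combinatorial part, your structural claim is exactly the paper's Proposition \ref{prop2}, but you actually prove it: induction on the tree level using the proximity condition ($|U_a\cap U_b|=m-1$, hence $|U_a\triangle U_b|=2$), the fact that the conditioned pair of an edge is the symmetric difference of the constraint sets of its two constituent nodes, and the bijection between edges and unordered variable pairs. The paper's version of this step is an informal appeal to ``violating the conditions of a valid vine structure,'' so your induction is the more rigorous of the two. What the paper's longer route buys is Proposition \ref{prop1} as a standalone, more general sufficient condition (conditional independence whenever \emph{every} copula whose constraint set contains both indices is an independence copula, regardless of truncation); what yours buys is brevity and a watertight proof of the combinatorial lemma. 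One cosmetic remark: in your base case, when $k=1$ the claim is simply vacuous (there are no levels $m<1$), so the aside about $j=1$ is unnecessary, though harmless.
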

For our simulation we focus on $D=10$ dimensional data and exclusively sample dependence copulas in trees $T_1, T_2, T_3$, ensuring that all pair copulas in $T_4,\ldots,T_9$ are set to the independence copula. 
As a result, $21$ out of $45$ pairs are fully conditionally independent.
For the vine structure, we use both the extreme cases of D-Vines and C-Vines as well as randomly sampled R-Vine structures.
As pair copulas we randomly choose from the Independence, Gaussian, T, Frank, Joe, Gumbel, and Clayton copulas with all possible rotations. 
We create three scenario groups for the R-Vine: a \textit{baseline}, a \textit{weak dependence}, and a \textit{positive dependence}.
For baseline scenarios, Kendall's $\tau$ is sampled from a uniform distribution in the interval $[0.3, 0.7]$, with the sign of dependence (positive or negative) assigned with equal probability; weak dependence scenarios use the interval $[0.1, 0.4]$; positive dependence scenarios employ positive dependence rotations or restrict Kendall's $\tau$ sampling to positive values, according to the copula.

We fit the data using a simple three-layer GTM with $40$ equidistant knots spanning $[-15, 15]$ in each spline in the decorrelation layers $\bsLambda_l$ and $15$ knots on the same grid in the marginal splines in $\tilde \bsh(\bfy)$. Unlike the real-world data applications in \autoref{sec:complex_distribution}, hyperparameter optimization regarding model architecture features was not conducted in the simulation study.

\subsection{Fitting the True Distribution}

To assess how well the model learns the true distribution, we compute the $\kld$ to the true density and compare it to alternative models. 
Specifically, we estimate a multivariate Gaussian density $\hat{f}_G$, the true vine copula density $\hat{f}_{VC}$ with the known pair copula construction (PCC), and the GTM $\hat{f}_{GTM}$. 
For the Gaussian density estimation, we employ the GGM Python package \cite{pypackage_ggm}. 
Using the known PCC, $\hat{f}_{VC}$ only estimates the parameters of the pair copulas.
To evaluate the efficiency of our approximate conditional independence penalty, we provide results for three different GTM conditional independence penalty schemes: without any LASSO penalty, with a LASSO penalty, with an Adaptive LASSO penalty.
For each model, we compute the $\kld$ to the true density of the data-generating distribution $f_{VC}$ to obtain: $\kld_G = \kld(f_{VC} \| \hat{f}_{G})$, $\kld_{VC} = \kld(f_{VC} \| \hat{f}_{VC})$ and $\kld_{GTM} = \kld(f_{VC} \| \hat{f}_{GTM})$.
We then compute how well the GTM approximates the true model on the scale from the Gaussian to the estimated true model in terms of $\kld$: 
\begin{equation}
    \text{rKLD} = \frac{ \kld_{GTM} - \kld_{VC}}{\kld_{G} - \kld_{VC}}
\end{equation}
A value of $\text{\footnotesize rKLD}=0$ indicates that the GTM performs just as good as an estimation of the true model with the oracle-knowledge of the true pair copula construction, while a value of $\text{\footnotesize rKLD}=1$ means that the GTM provides no benefit over a simple multivariate Gaussian model. Values $\text{\footnotesize rKLD} < 1$ indicate a benefit of the GTM compared to the Gaussian model, while values larger than $1$ mean that the GTM performs worse than the Gaussian model.
We approximate each $\kld$ using the test set.

\autoref{fig:xvine_kld} depicts $\text{\footnotesize rKLD}$ results across the different vine scenarios.
The results demonstrate that the GTM without penalty can better approximate the true underlying distribution than a Gaussian model down to training sample sizes of $250$ observations, but tends to perform worse than the Gaussian baseline as the sample size decreases to $125$ observations. In the R-vine-weak scenario, the GTM clearly outperforms the Gaussian with $1000$ observations, draws even with $500$, and loses to the Gaussian with $250$ and smaller numbers of observations -- this structure is especially challenging as weak dependencies in pair copulas are well approximated by a Gaussian.
The GTM with a simple LASSO penalty performs largely similar to the unpenalized GTM, indicating no clear advantage from the LASSO penalization. 
However, adding an adaptive LASSO penalty noticeably enhances model performance at 125 training samples, allowing the GTM to match or slightly outperform the Gaussian approximation for the D-vine, R-vine, and C-vine scenarios.
For the D-vine, C-vine, R-vine-weak, and R-vine-positive scenarios, we even find that the adaptive LASSO improves upon the unpenalized model for sample sizes of $250$, $500$ and even $1000$. The improvement is especially pronounced in the R-vine-weak scenario, that appeared to be especially challenging for the GTM without adaptive LASSO.
Overall, we conclude that the adaptive LASSO penalty enhances the GTM's ability to approximate the true underlying distribution and, when it fails to provide improvement, it does no harm.
We attribute the improved performance to a reduction in overfitting.
This is particularly true for more challenging scenarios where dependencies are difficult to detect, such as smaller sample sizes that are prone to overfitting. 
Additionally, the adaptive LASSO is beneficial in situations where the Gaussian assumption performs well, such as when effects are weak.
\begin{figure}[ht]
    \centering
    \begin{minipage}{0.32\textwidth}
        \centering
        \includegraphics[width=\textwidth]{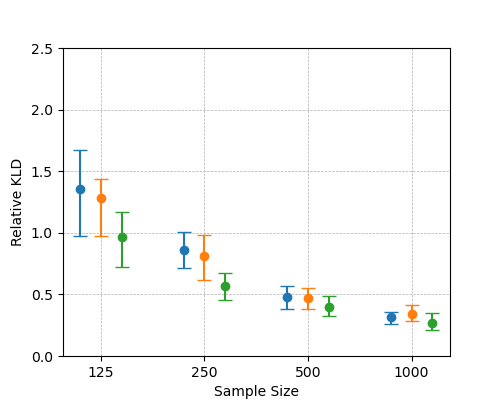}
        \subcaption{D-vine}
    \end{minipage}
    \begin{minipage}{0.32\textwidth}
        \centering
        \includegraphics[width=\textwidth]{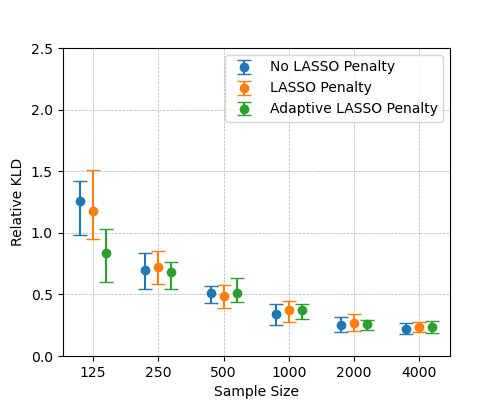}
        \subcaption{R-vine}
    \end{minipage}
    \begin{minipage}{0.32\textwidth}
        \centering
        \includegraphics[width=\textwidth]{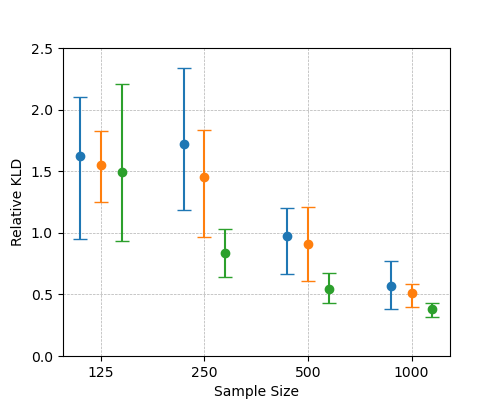}
        \subcaption{R-vine-Weak}
    \end{minipage}
    \begin{minipage}{0.32\textwidth}
        \centering
        \includegraphics[width=\textwidth]{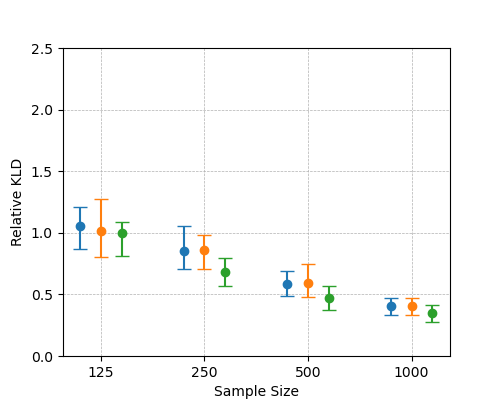}
        \subcaption{R-vine-Positive}
    \end{minipage}
    \begin{minipage}{0.32\textwidth}
        \centering
        \includegraphics[width=\textwidth]{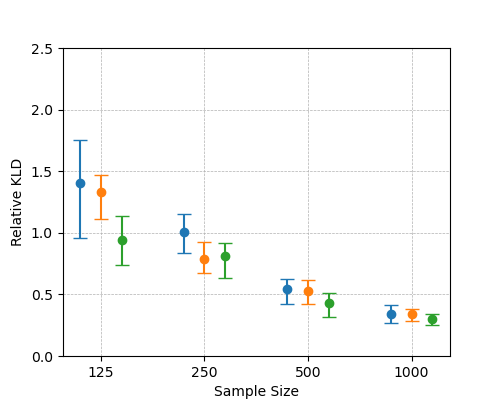}
        \subcaption{C-vine}
    \end{minipage}
    \caption{Across different vine scenarios, the figure depicts the relative $\kld$ of the GTM on the scale from the true model to a Gaussian approximation. A value of $0$ would mean that the GTM is as good as the true model with estimated parameters and a value of $1$ or larger means that the GTM is not better than a Gaussian approximation. The GTM performance for different training sample sizes is presented for the model without a LASSO penalty in blue, with a LASSO penalty in orange, and with an adaptive LASSO penalty in green. The $\kld$ is approximated via a $40.000$ observations large test sample. The dots represent the means and the whiskers the $20\%$ and $80\%$ quantiles across $30$ replications.}
    \label{fig:xvine_kld}
\end{figure}

\subsection{Identifying Conditional Independencies}

To evaluate the GTM's ability to identify conditional independencies, we compare it to a baseline GGM, by means of the Area under the Curve ($\auc$).
The $\auc \in [0,1]$ is a scalar measure of classification performance, representing the probability that a randomly chosen conditional dependence is ranked higher than a randomly chosen conditional independence. 
It is thus ideal to compare the validity of the ranking of conditional dependencies.
To rank conditional dependencies for the GGM we use the estimated conditional correlations.
As discussed in \autoref{theory:ci_metric}, we use the likelihood ratio based metrics for the GTM. 
Due to its better interpretability, we focus the report on the $\iae$ and depict its $\auc$ scores across scenarios in \autoref{fig:xvine_auc_iae}.
To compute the Gauss-Legendre-Quadratures required in \autoref{alg:llr_metrics} we use $20$ quadrature points and bound the quadrature at the borders of the transformation layer span.

We observe that the GTM generally outperforms the GGM, particularly for larger sample sizes, scenarios with stronger dependencies reflected by larger Kendall's $\tau$, and especially for data generated from C-vines.
The greater nonlinearity of dependencies, due to larger Kendall's $\tau$ leading to pronounced effects like increased asymmetric tail dependencies, together with the challenging nature of dependencies in C-vines, increase the GTM's advantage over the GGM. Remarkably, this trend is consistent across all three GTM variants, even the unpenalized one. Hence, these improvements can largely be attributed to the GTM’s enhanced capability to learn the underlying distribution more accurately.

When comparing the different GTM variants, we find no notable improvement by adding a LASSO penalty.
However, the adaptive LASSO penalty does offer enhancements, especially for small sample sizes of $125$ and $250$, with D-vines showing particularly consistent benefits.
The R-vine-weak scenario with $125$ training samples—characterized by near-linear dependencies due to smaller Kendall’s $\tau$ in the pair copulas—is the sole scenario where the GTM with the adaptive LASSO does not surpass the GGM in terms of identifying independence via the $\iae$.

In the appendix, we present the $\auc$ of the GTM for different metrics: for the $\kld$ in \autoref{app_fig:xvine_auc_kld}, for the mean absolute local pseudo-precision matrix entry $p_{r,c,-}$ in \autoref{app_fig:xvine_auc_pmatrix} as well as for the mean absolute local pseudo-conditional correlation $\rho_{r,c,-}$ in \autoref{app_fig:xvine_auc_condcorr}.
Across all evaluated metrics, we consistently find that the GTM outperforms the GGM in identifying conditional independencies, particularly for larger sample sizes, strong dependencies, and complex C-vine structures.
We also observe consistently that an adaptive LASSO penalty enhances performance for small sample sizes, whereas a simple LASSO penalty does not.
Lastly, we find that $p_{r,c,-}$ is competitive, and in some instances, even slightly better in terms of $\auc$ than the likelihood ratio metrics.
This is promising, as it indicates that our approximate penalty scheme effectively targets conditional independencies. 
\begin{figure}[ht]
    \centering
    \begin{minipage}{0.32\textwidth}
        \centering
        \includegraphics[width=\textwidth]{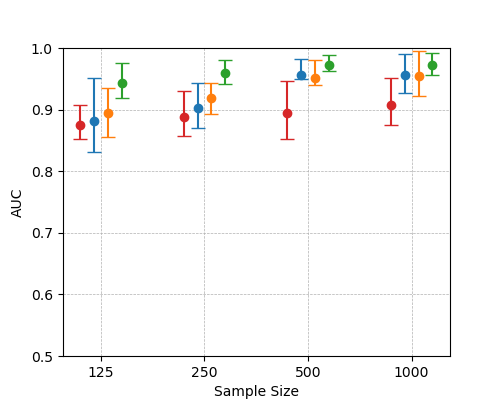}
        \subcaption{D-vine}
    \end{minipage}
    \begin{minipage}{0.32\textwidth}
        \centering
        \includegraphics[width=\textwidth]{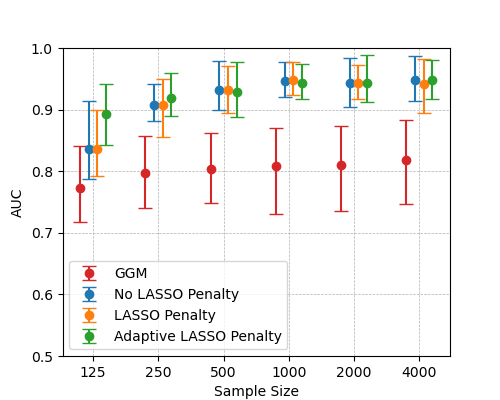}
        \subcaption{R-vine}
    \end{minipage}
    \begin{minipage}{0.32\textwidth}
        \centering
        \includegraphics[width=\textwidth]{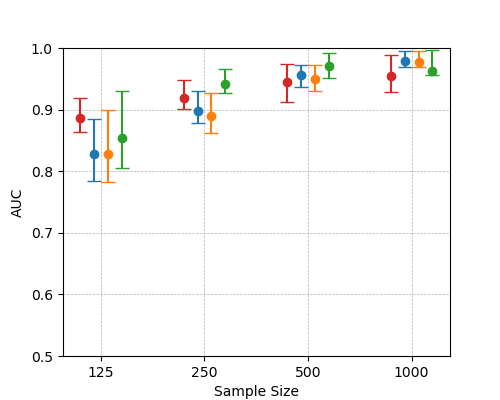}
        \subcaption{R-vine-Weak}
    \end{minipage}
    \begin{minipage}{0.32\textwidth}
        \centering
        \includegraphics[width=\textwidth]{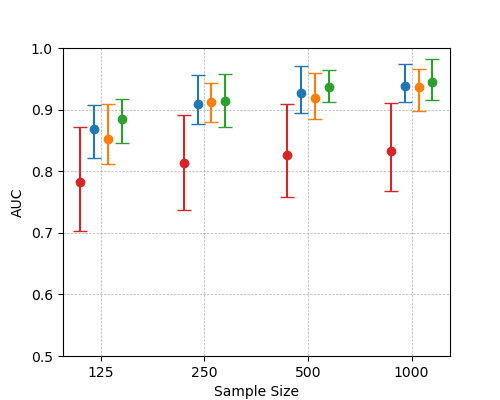}
        \subcaption{R-vine-Positive}\label{fig:rvine_posonly}
    \end{minipage}
    \begin{minipage}{0.32\textwidth}
        \centering
        \includegraphics[width=\textwidth]{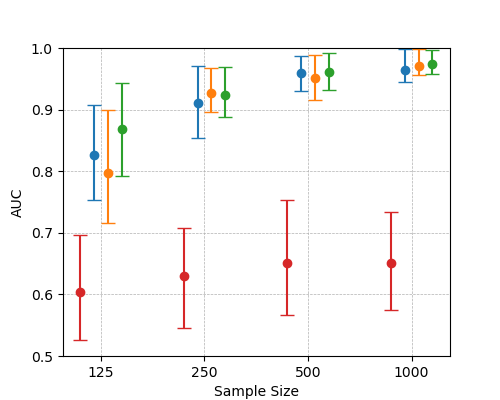}
        \subcaption{C-vine}
    \end{minipage}
    \caption{Across different vine scenarios, the figure depicts the $\auc$ of the GTM in identifying full conditional independencies based on the $\iae$ between the GTM and the GTM with independence assumption for each dimension pair. The GTM performance for different training sample sizes is presented for the model without a LASSO penalty in blue, with a LASSO penalty in orange and with an adaptive LASSO penalty in green and the benchmark GGM in red. The $\iae$ scores are approximated via $10,000$ synthetic samples from the GTM. The dots represent the means, and the whiskers the $20\%$ and $80\%$ quantiles across $30$ simulation replications.}
    \label{fig:xvine_auc_iae}
\end{figure}

\subsection{Main Findings from Simulations}
We conclude the simulation study by recapping the main findings.
First, the GTM performs effectively in both learning the true underlying distribution and identifying conditional independencies.
Notably, it often surpasses GGMs in both areas across scenarios, achieving strong performance down to $250$ observations for standard scenarios and $500$ observations for the R-vine-weak scenario.
The GTM particularly excels in identifying conditional independencies when dependencies are more nonlinear, indicated by larger Kendall's $\tau$ in the pair copulas, and when independencies are challenging to identify due to shorter conditional dependency paths between variables, as found in C-vine structures.
Second, we find that an adaptive LASSO penalty improves both the learning of the underlying distribution as well as the identification of independencies for smaller sample sizes down to $125$. 
Therefore, for practical purposes, we suggest training a simple GTM, followed by an adaptive-LASSO GTM based on the first GTM's local pseudo-precision matrix.

\section{Complex Distributions: MAGIC Application} \label{sec:complex_distribution}

To further evaluate the performance of our model, particularly in terms of learning complex real-world multivariate distributions beyond the capabilities of classical parametric copulas, we compare our model against the non-parametric vine copula approach detailed by \cite{nonpar_copula_evade_cod}. The authors demonstrate the effectiveness of constructing a vine from non-parametric pair copulas using a dataset that simulates measurements from the MAGIC (Major Atmospheric Gamma-ray Imaging Cherenkov) telescopes located in the Canary Islands. 
This dataset, publicly available on the University of California Irvine (UCI) Machine Learning repository website\footnote{See here: \url{https://archive.ics.uci.edu/dataset/159/magic+gamma+telescope}}, features complex multivariate relationships, making it well-suited for evaluating our model. Originally employed in a case study \citep{magic_case_study}, the objective was to differentiate gamma rays (signal) from hadron showers (noise) observations. The dataset consists of $19,020$ observations across $10$ continuous dimensions that describe the shape, size, orientation, intensity, and asymmetry of the Cherenkov images, with $12,332$ observations classified as gamma rays and $6,688$ classified as hadron showers.
The non-parametric vine copula \citep{nonpar_copula_evade_cod} significantly outperformed the classification methods applied in \cite{magic_case_study} by training separate models for each class and subsequently utilizing the computed class probabilities within a Bayesian classifier. 
This result is particularly notable, given that the models were not specifically designed for classification; instead, they accurately learned the underlying distributions to the extent that they could differentiate observation classes by comparing likelihoods for each group.

We replicated the training procedure of \cite{nonpar_copula_evade_cod} and utilized the likelihoods obtained from the GTMs within a Bayes classifier. The dataset was divided such that $2/3$ comprised the training sample and $1/3$ served as the control sample \citep{magic_case_study}. For the GTM architecture, several hyperparameters needed to be selected, including the number of knots in the transformation layer splines, the number of knots in the decorrelation layer spline, and the number of decorrelation layers. Additionally, it was necessary to identify the optimal penalties: the second-order ridge penalty for the marginal transformation P-splines ($\tau_4$), the first ($\tau_1$) and second-order ridge penalties ($\tau_2$) for the decorrelation layer P-splines, and a potentially weighted LASSO penalty ($\tau_3$).
Regarding the number of knots in the transformation layer, we trained separate unpenalized CTM on each marginal, progressively increasing the number of knots until the one-dimensional latent space attained a closeness to normality, as determined by a Shapiro-Wilk test p-value of $0.01$ or higher.
In the decorrelation layer, we implemented $30$ knots without further tuning. For the number of decorrelation layers, we trained the model across depths of $L \in \{3, 4, \ldots, 9\}$, performing $40$ hyperparameter draws for the penalties for each depth. We divided the training sample into $80/20$ for training and validation, thus avoiding additional cross-validation. To determine the optimal model depth, penalties, and early stopping criteria, we selected the models with the highest likelihood on the validation set for each group respectively. We found that the optimal number of decorrelation layers was $L_h=8$ and $L_g=6$.
We did not observe significant improvements by applying a weighted LASSO penalty ($\tau_3$), likely due to the large size of the training dataset. For the Gauss-Legendre Quadratures required by \autoref{alg:llr_metrics}, we used $20$ quadrature points spanning $[-15,15]$, computing them in the space of $\bftz$ after the marginal transformation.
In \autoref{tab:magic_fpr_tpr}, we present the true positive rate (TPR) of the classification across different values for the false positive rate (FPR), as reported by \cite{nonpar_copula_evade_cod}, alongside results obtained with the GTM.
The results demonstrate that the GTM improves upon both benchmarks at false positive rates of 0.02, 0.05, 0.1 and 0.2.
At a FPR of 0.01, it is slightly below both competitors.
Overall, we conclude that the GTM outperforms the multivariate Kernel Density Estimator (MVKDE) and is competitive to the vine copula (Vine).
A minimal $\text{GTM}_3$ with $L=3$ decorrelation layers, also reported in \autoref{tab:magic_fpr_tpr}, achieved TPRs that outperform the Kernel Density estimator in all five FPRs and is comparable to the vine copula and the larger GTM, even slightly better at smaller FPRs, indicating that even a small $\text{GTM}_3$ is competitive.
\begin{table}[htbp]
    \centering
    \begin{tabular}{lrrrrr}
        \toprule
        FPR & 0.01 & 0.02 & 0.05 & 0.10 & 0.20 \\
        \midrule
        Vine              & 0.335 & 0.428 & 0.652 & 0.780 & 0.918 \\
        MVKDE             & 0.335 & 0.408 & 0.567 & 0.730 & 0.868 \\
        GTM               & 0.323 & 0.451 & 0.663 & 0.815 & 0.927 \\
        $\text{GTM}_3$    & 0.342 & 0.453 & 0.639 & 0.796 & 0.916 \\
        \bottomrule
    \end{tabular}
    \caption{Comparison of true positive rates (TPR) for a given target false positive rate (FPR) for different models.
    We compare The GTM to the benchmark vine copula (vine) and Multivariate Kernel Density Estimator (mvkde) from \cite{nonpar_copula_evade_cod}.
    }
    \label{tab:magic_fpr_tpr}
\end{table}
To illustrate key insights into the GTM, we present results using data from the hadron group; results for the gamma group are comparable. We focus on the hadron data primarily because it comprises fewer observations, with $3,568$ samples in the training data, thereby better showcasing the model's capability. 
To demonstrate the GTM's approximation of the complex data distribution, we select some particularly intriguing pair plots from the training data observations and juxtapose them with an identical number of synthetically sampled data points from the model in \autoref{fig:magic_h_pairplots_train_synthetic}.
The results indicate that the GTM effectively approximates the varying complex patterns in the data distribution. However, the GTM also tends to smooth over the data, missing sharp details in low-density regions, such as the negative relationship at the bottom right of subplot (e) or the variance extent in the outskirts of the x-shaped density in subplot (f).

\begin{figure}[htbp]
  \centering
  {\footnotesize Training samples} \\
  \begin{subfigure}[b]{0.15\linewidth}
    \includegraphics[width=\linewidth]{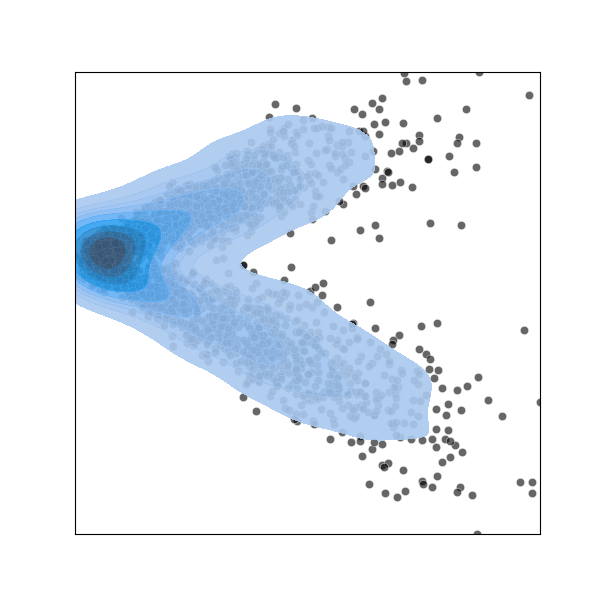}
    \caption{$Y_0,Y_6$}
  \end{subfigure}
  \begin{subfigure}[b]{0.15\linewidth}
    \includegraphics[width=\linewidth]{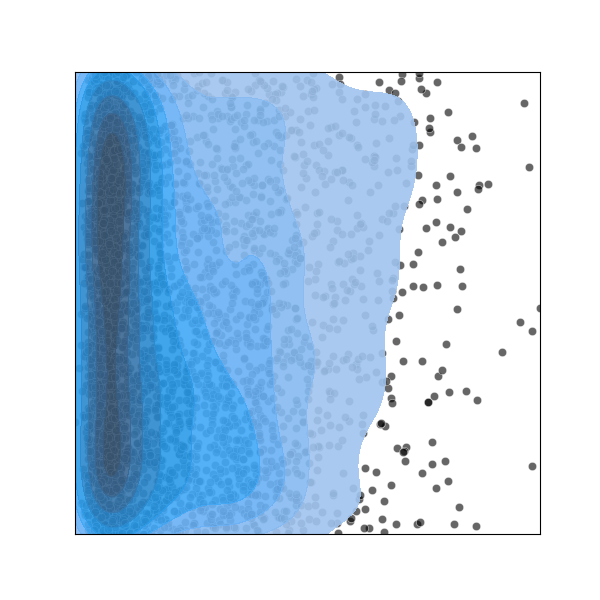}
    \caption{$Y_0,Y_8$}
  \end{subfigure}
  \begin{subfigure}[b]{0.15\linewidth}
    \includegraphics[width=\linewidth]{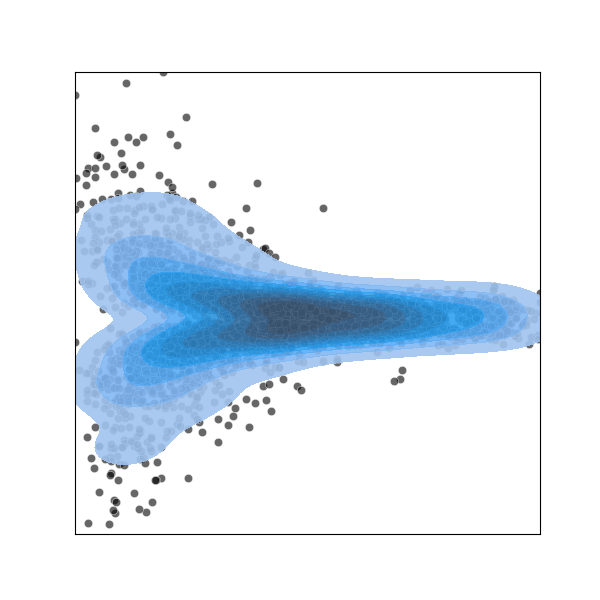}
    \caption{$Y_3,Y_7$}
  \end{subfigure}
  \begin{subfigure}[b]{0.15\linewidth}
    \includegraphics[width=\linewidth]{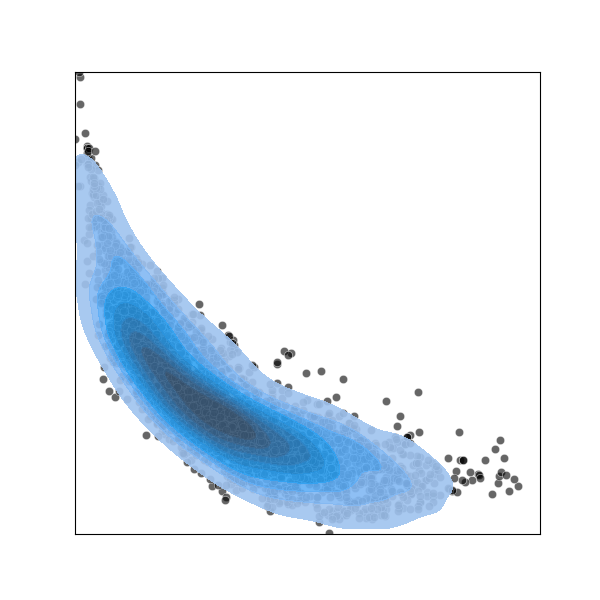}
    \caption{$Y_4,Y_2$}
  \end{subfigure}
  \begin{subfigure}[b]{0.15\linewidth}
    \includegraphics[width=\linewidth]{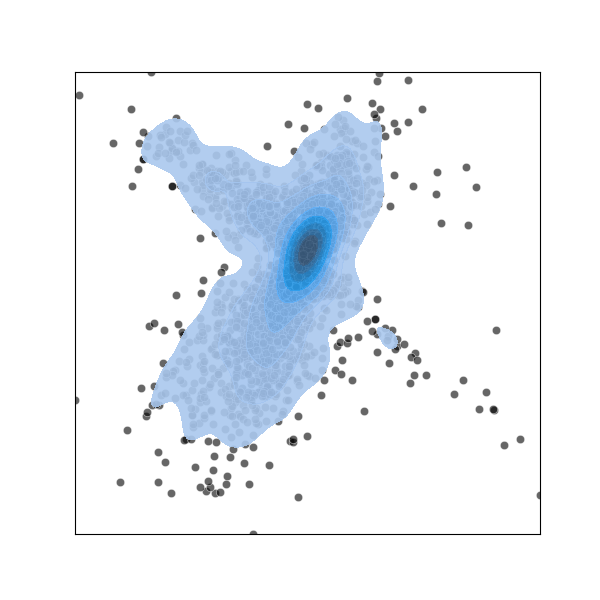}
    \caption{$Y_5,Y_6$}
  \end{subfigure}
  \begin{subfigure}[b]{0.15\linewidth}
    \includegraphics[width=\linewidth]{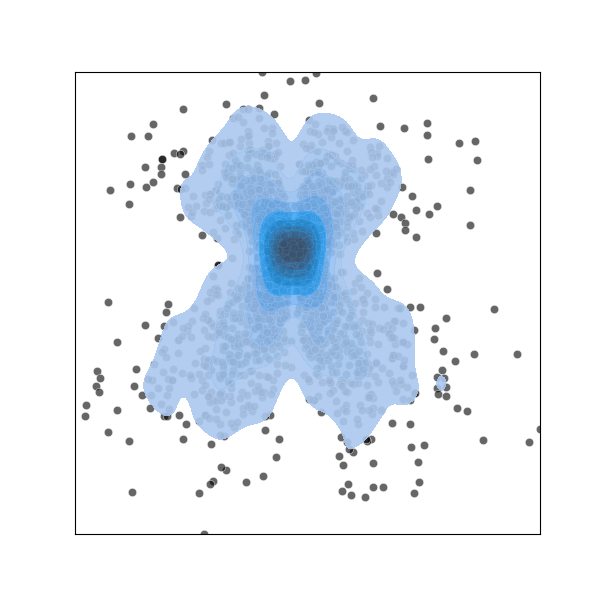}
    \caption{$Y_7,Y_6$}
  \end{subfigure}

  {\footnotesize Synthetic samples from the model} \\
  \begin{subfigure}[b]{0.15\linewidth}
    \includegraphics[width=\linewidth]{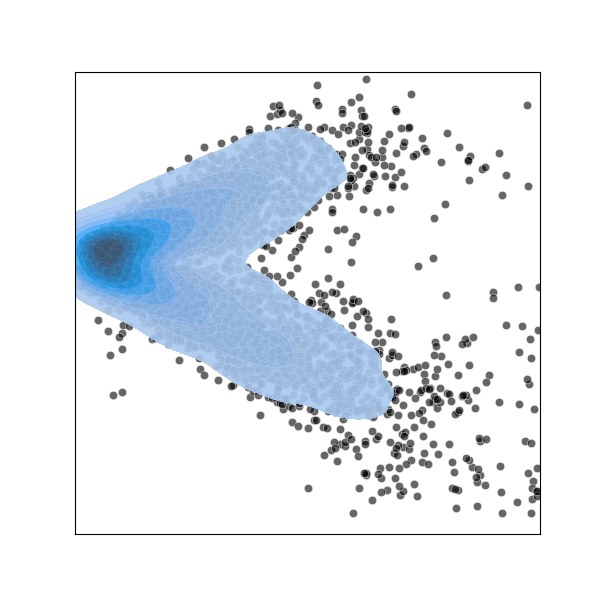}
  \end{subfigure}
  \begin{subfigure}[b]{0.15\linewidth}
    \includegraphics[width=\linewidth]{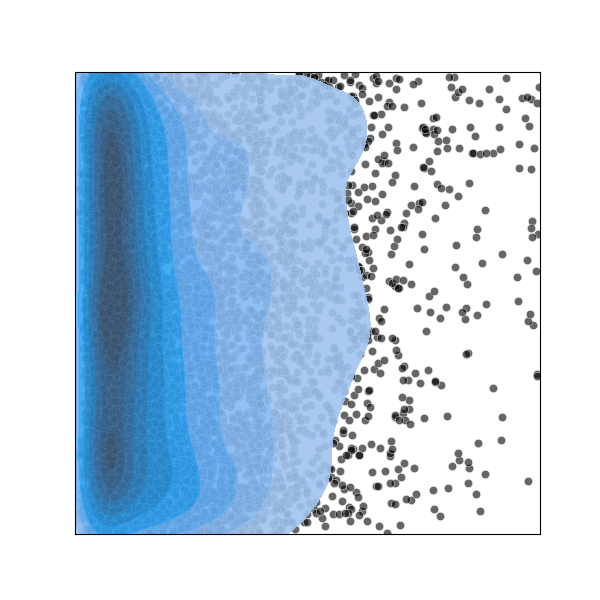}
  \end{subfigure}
  \begin{subfigure}[b]{0.15\linewidth}
    \includegraphics[width=\linewidth]{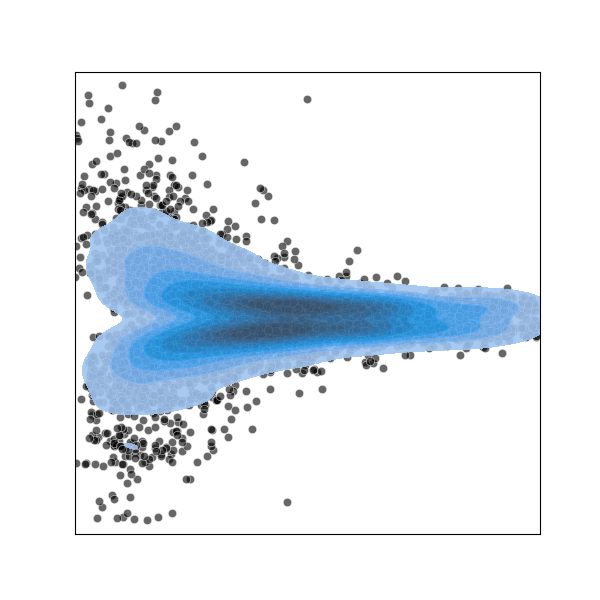}
  \end{subfigure}
  \begin{subfigure}[b]{0.15\linewidth}
    \includegraphics[width=\linewidth]{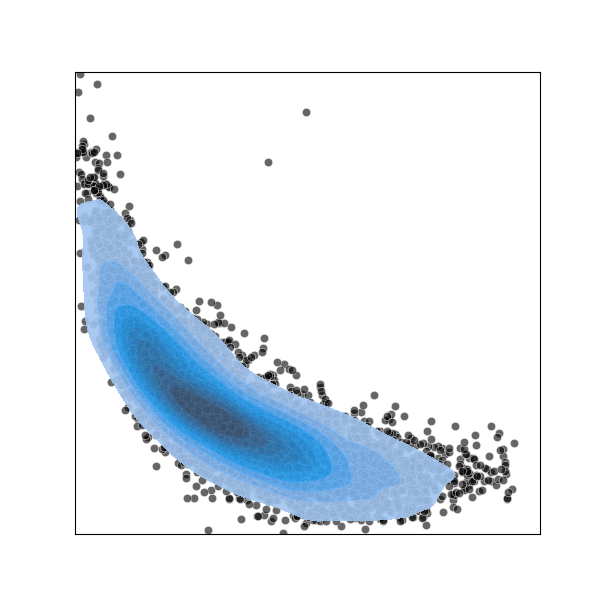}
  \end{subfigure}
  \begin{subfigure}[b]{0.15\linewidth}
    \includegraphics[width=\linewidth]{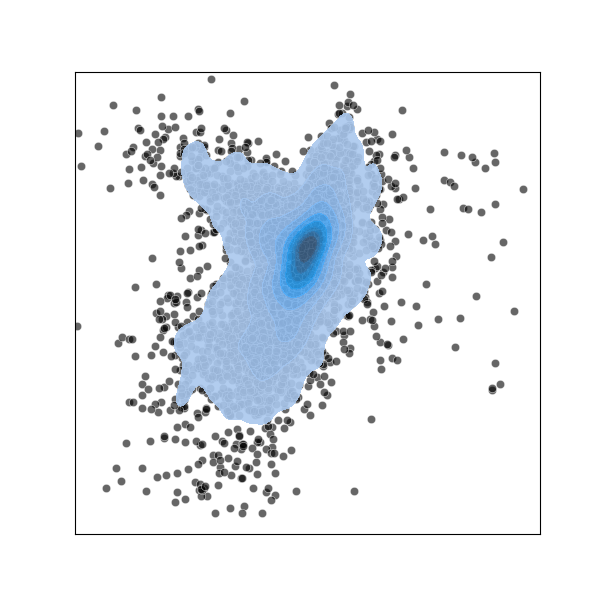}
  \end{subfigure}
  \begin{subfigure}[b]{0.15\linewidth}
    \includegraphics[width=\linewidth]{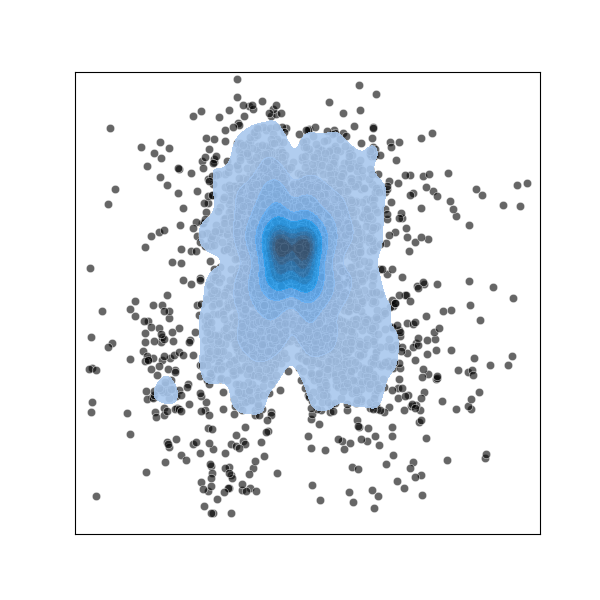}
  \end{subfigure}
  
  \caption{Subset of pairplots from the hadron class, where the captions state the dimensions depicted. The first row are the $3568$ training set samples. The second row are $10000$ synthetically sampled observations from the model. 
  }
  \label{fig:magic_h_pairplots_train_synthetic}
\end{figure}

One particularly attractive feature of the GTM is its ability to interpret patterns in the learned distribution through the lens of local conditional pseudo-correlation patterns. \autoref{fig:graph_magic_data_h} illustrates the learned conditional dependency graph for the hadron data, displaying only pairs with a conditional dependence $\iae$ above $0.1$.

\begin{figure}[htbp]
  \centering
  \begin{subfigure}[b]{1\linewidth}
    \includegraphics[width=\linewidth]{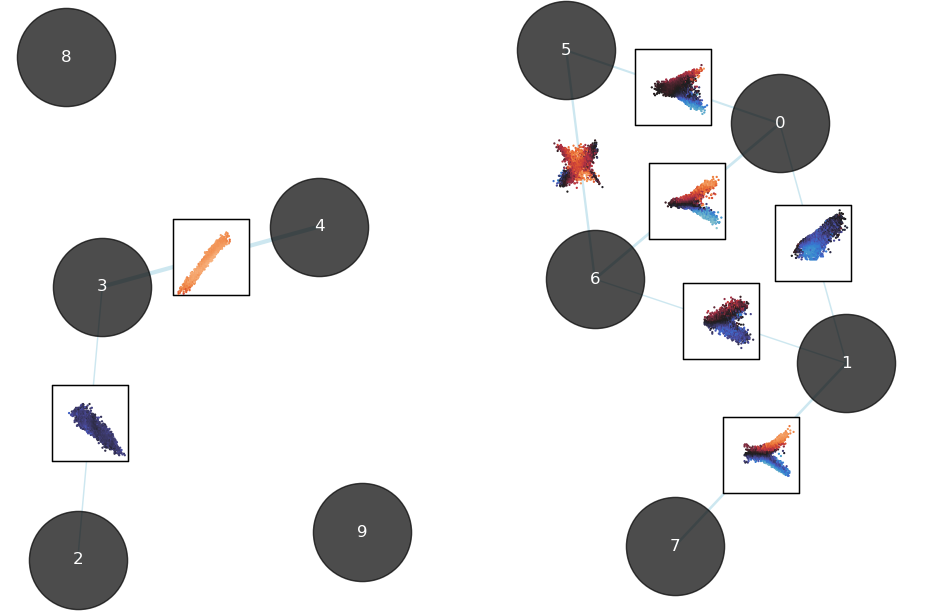}
  \end{subfigure}
  \caption{The graph depicts the full conditional dependency Graph for the Hadron dataset. Only nodes with a conditional dependence $\iae$ above $0.1$ are linked with edges. The nonlinear dependencies after marginal transformations are overlayed on each edge. }
  \label{fig:graph_magic_data_h}
\end{figure}

In other words, we focus on pairs where the learned conditional dependency results in at least a $10\%$ non-overlap between the full model-implied conditional density and conditional independence density on average across all conditioning sets.
Given the nonlinear nature of dependencies, we overlay the pair plots to demonstrate the type of dependency.
\autoref{fig:sign_conditional_corr_h} further showcases these eight conditional dependencies in both the original space of $\bfy$ on the left (a) and the space of $\tilde{\mathbf{z}}$ after transforming the marginals on the right (b) for the training data. We note three major observations: First, dependencies become closer to linear after marginal transformation in three instances, yet remain clearly nonlinear in the other five cases. Second, local conditional pseudo-correlations can be interpreted as positive where there seems to be a positive local relationship and negative where there is a negative local relationship. Finally, the local conditional pseudo-correlations are strongly linked to conditional independence, evident from their tendency to shrink and approach zero as the $\iae$ decreases.
Indeed, this observation is also reflected in a Spearman and Pearson correlations between the mean absolute local conditional pseudo-correlations and the $\iae$ for the $45$ pairs of $0.83$ and $0.94$ in the hadron and $0.89$ and $0.96$ in the gamma dataset.
In appendix \autoref{fig:unsign_conditional_corr_h}, we provide pairwise local pseudo-conditional correlation plots for the remaining non-significant pairs.

\begin{figure}[htbp]
  \centering
  \begin{subfigure}[b]{0.45\linewidth}
  \caption{$Y$}
    \includegraphics[width=\linewidth]{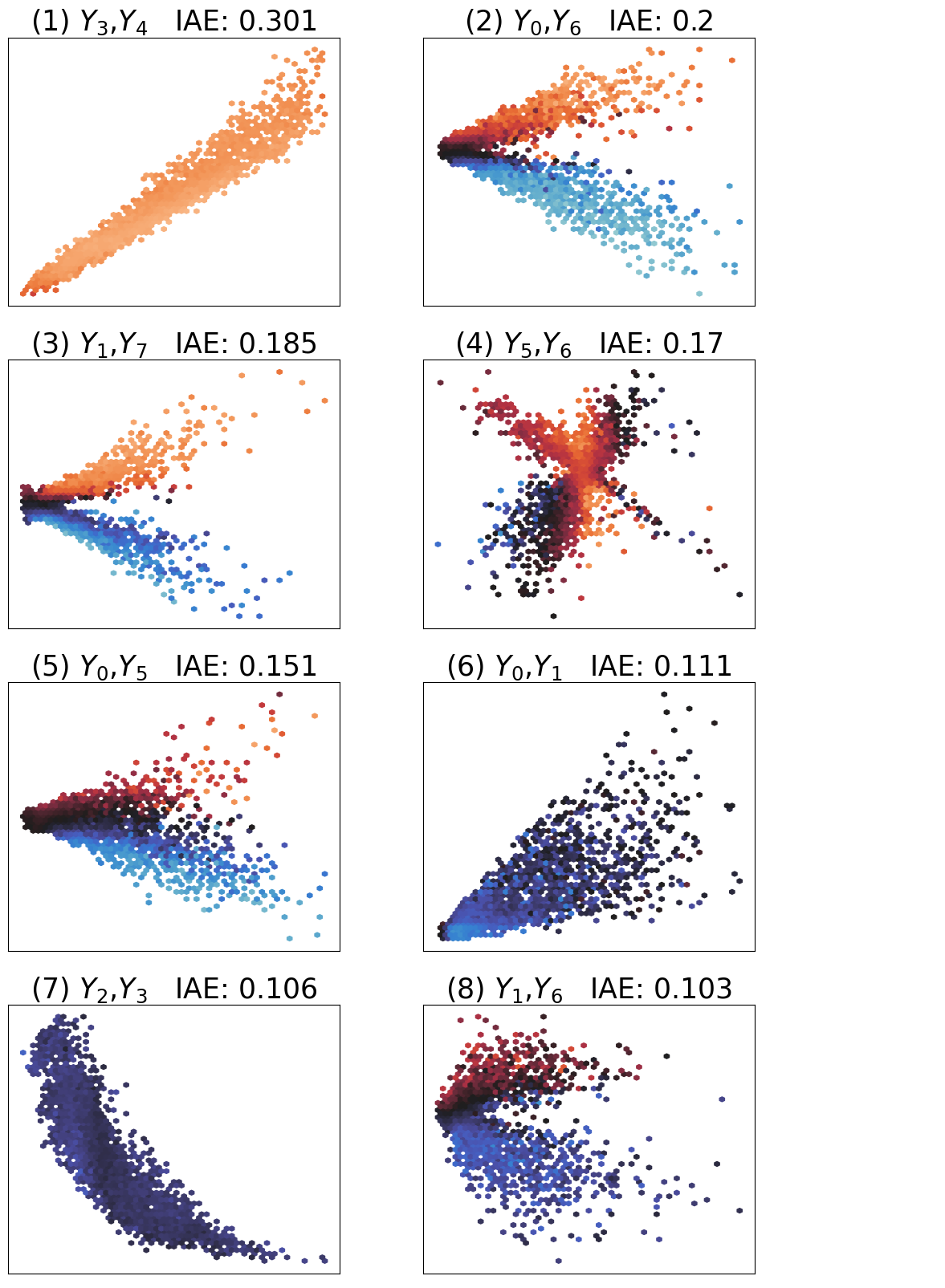}
  \end{subfigure}
  \begin{subfigure}[b]{0.45\linewidth}
  \caption{$\tilde{Z}$}
    \includegraphics[width=\linewidth]{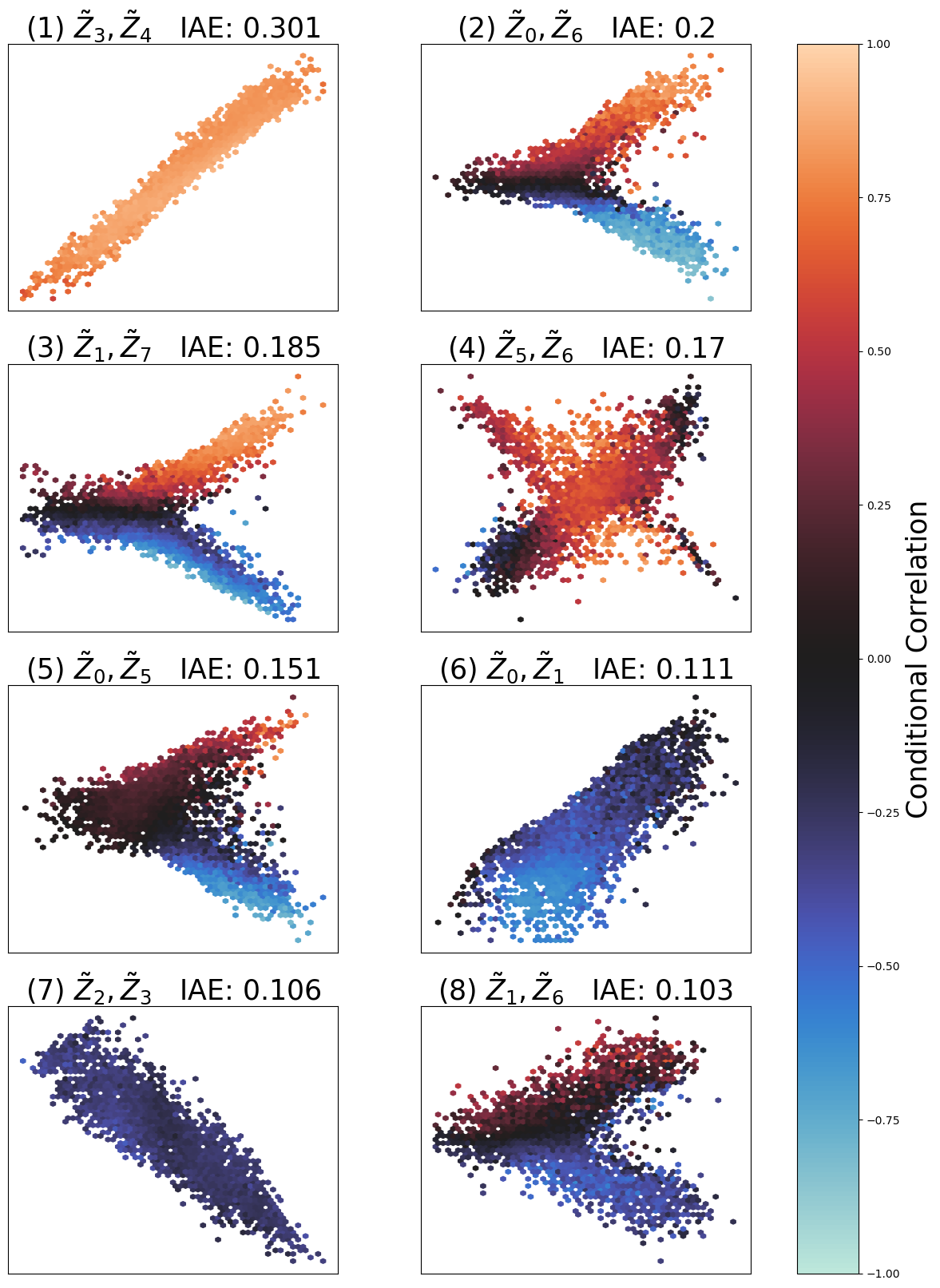}
  \end{subfigure}
  \caption{The eight conditionally dependent pairs in the hadron dataset ordered by their $\iae$ conditional dependence metrics.
  The plots on the left depict the original training data and the ones on the right depict the data after the marginal transformation.
  Both are colored by the local conditional pseudo-correlations.}
  \label{fig:sign_conditional_corr_h}
\end{figure}

The interpretation appears to fail in plots (4) and (6) of \autoref{fig:sign_conditional_corr_h}. In subplot (4), there are instances where a negative relationship shows positive local conditional pseudo-correlations, and even more clearly in subplot (6), a distinctly positive linear relationship exhibits negative local conditional pseudo-correlations. These initially puzzling results can be attributed to the fact that both sets of variables are not only directly but also indirectly connected through other variables, as depicted in \autoref{fig:graph_magic_data_h}. Specifically, $Y_0$ and $Y_1$ are indirectly linked via $Y_6$, while $Y_5$ and $Y_6$ are indirectly connected through $Y_0$. 
To better interpret these relationships, it is prudent to examine the conditional dependence of these pairs more closely. We illustrate this by selecting two example training data points from each pair and generating conditional samples by keeping the other eight conditioning variables fixed. Using importance sampling, we plot the resulting synthetic samples in the rows of \autoref{fig:conditional_correlation_sampling}. The importance sampling procedure is described in Appendix \autoref{alg:cond_samples}. The first two plot columns display the sampled data in the $\bfy$ space, initially in terms of the bivariate conditional density and subsequently with overlaid local conditional pseudo-correlations. The third and fourth plot columns similarly depict the density and local conditional pseudo-correlations after marginal transformation to $\tilde{\mathbf{z}}$. 

The latter are critical because actual local conditional pseudo-correlations are learned in the transformed space. All four plot columns now present clearer patterns: positive or negative local conditional pseudo-correlations correlate with positive or negative relationships. This observation holds consistently across both the MAGIC data and the simulation study. 
\begin{figure}[htbp]
  \centering
  \begin{subfigure}[b]{1\linewidth}
  \caption{$Y_0, Y_1$}
    \includegraphics[width=\linewidth]{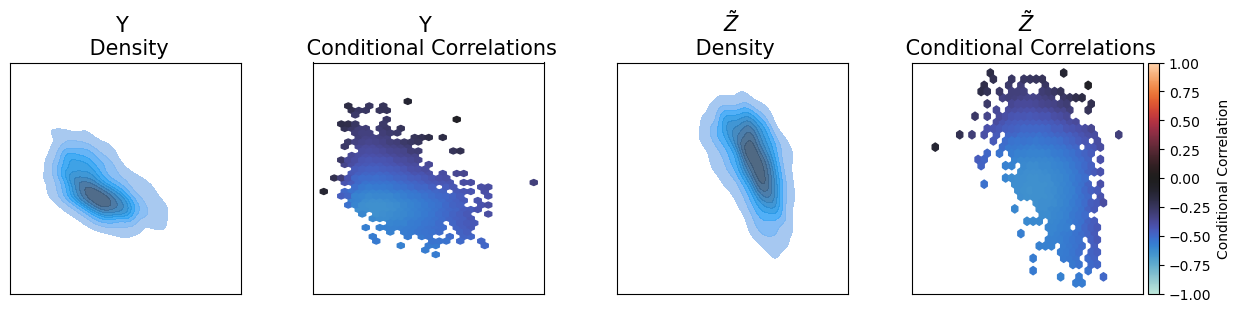}
  \end{subfigure}
  \begin{subfigure}[b]{1\linewidth}
  \caption{$Y_5, Y_6$}
    \includegraphics[width=\linewidth]{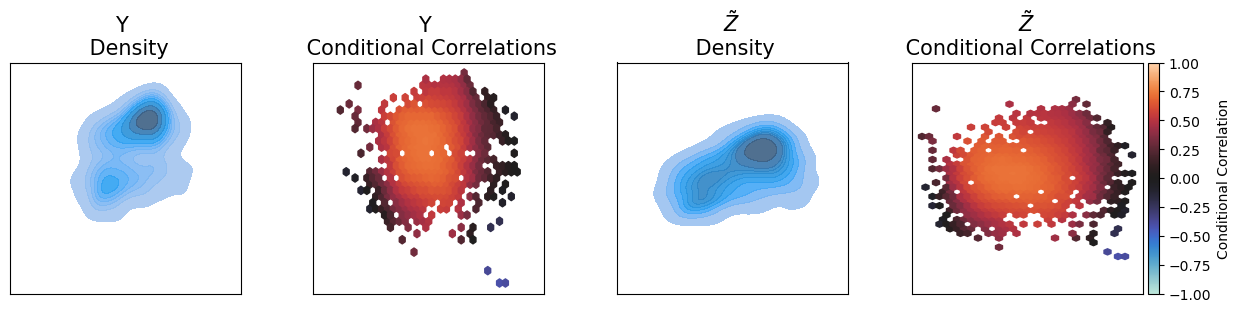}
  \end{subfigure}
  \caption{Conditional samples of pairs $Y_0,Y_1$ and $Y_5,Y_6$ at one illustrative training observation point.
  The samples are created by sampling the respective dimension while keeping all other dimensions fixed at the training observation point.
  The first subplot depicts the density of the samples in the data space $Y$.
  The second also depicts the samples in the data space $Y$ and is colored by the local conditional pseudo-correlation pattern.
  The last two plots analogously plot the density and the local conditional pseudo-correlations for the marginally transformed samples $\tilde{Z}$.}
  \label{fig:conditional_correlation_sampling}
\end{figure}

\section{Conclusion} \label{sec:conclusion}

In this work, we introduced Graphical Transformation Models (GTMs)\footnote{The GTM python code is available on GitHub at \url{https://github.com/MatthiasHerp/gtm}} as a novel approach for modeling multivariate distributions with complex dependencies, while preserving the interpretability of conditional independence structures. By extending Multivariate Conditional Transformation Models (MCTMs) with a flexible sequence of transformations inspired by normalizing flows, we demonstrated how GTMs bridge the gap between Gaussian graphical models and highly flexible but opaque machine learning methods. In achieving this, we incorporated spline penalties that balance between a Gaussian copula and a fully nonparametric normalizing flow, providing a reasonable baseline, controlling flexibility and penalize towards conditional independencies. Additionally, we offered an interpretation of the conditional dependency structure through local conditional pseudo-correlations.

Certain limitations and avenues for future research remain. While GTMs offer an interpretation of dependencies via local conditional pseudo-correlations, these are not truly metrics of conditional independence. Therefore, our adaptive LASSO serves as only an approximate conditional independence penalty. To address this, we introduced the $\iae$ and $\kld$ metrics for a more rigorous quantification of conditional independence, and future research will aim to develop a likelihood-ratio-based test for formal statistical inference. 
Additionally, GTMs are not true copulas because the marginals after the transformation layer are not uniformly distributed. This limitation may be mitigated by training the marginals separately from the joint structure \citep[as in][]{wiese2019copulamarginalflows}, resulting in a latent space with approximately standard Gaussian marginals. These marginals can then be transformed using the Gaussian cumulative distribution function to define the sequence of decorrelation layers effectively as a copula.
Furthermore, an important direction for future work is the integration of covariates to enhance the flexibility and applicability of GTMs. An initial step could involve the inclusion of interpretable marginal location and scale effects as proposed by \cite{brachemBayesianPenalizedTransformation2024}.

Overall, GTMs offer a flexible and interpretable framework for modeling multivariate dependencies. They facilitate moving beyond a Gaussian Copula in a nonparametric manner, while still enabling the identification, interpretation, and approximate penalization of full conditional dependencies. We foresee that this approach will prove beneficial for creating nonparametric undirected graphs across various applications, particularly in fields where understanding the structure of conditional dependencies is crucial and the assumption of linear or even monotonic effects is too restrictive.

\bibliographystyle{jasa3}
\bibliography{bib.bib}

\appendix

\section{Notation} \label{app:notation}
\begin{table}[!htbp]
    \centering
    \begin{tabular}{p{7.2cm}p{7.2cm}}
        \toprule
        \textbf{Symbol} & \textbf{Description} \\
        \midrule
        $Y = (Y_1, Y_2, ..., Y_J)^T \in \mathbb{R}^J$ & Response random variable vector of dimension $J$ \\
        $\mathbf{y}_i = (y_{i,1}, y_{i,2}, ..., y_{i,J})^T \in \mathbb{R}^J$ & Observation $i$ of response variable vector \\
        $y_{i,j} \in \mathbb{R}$ & Observation $i$ of response variable $j$ \\
        $\begin{bmatrix} \mathbf{y}^1 & \mathbf{y}^2 & \dots & \mathbf{y}^S \end{bmatrix}^T \in \mathbb{R}^{N \times J}$ & Observation Sample of Size $S$ for the response vector \\
        \midrule
        $f(Y)$ & Probability density function (pdf) \\
        $F(Y)$ & Cumulative distribution function (cdf) \\
        \midrule
        $\mathbf{\tilde{h}}(Y) = 
        (\tilde{h}_1(Y_1), \tilde{h}_2(y_2), ..., \tilde{h}_J(Y_J))^T$ & Marginal transformation for each response dimension $1,...,J$ \\
        $Z = \mathbf{h}(Y) = 
        (Z_1, Z_2, ..., Z_J)^T
        $ & Completely transformed response random variable vector to latent space of dimensionality $J$ \\
        \bottomrule
        $\mathbf{a}(y_{i,j}) = (a_1(y_{i,j}), a_2(y_{i,j}), ..., a_K(y_{i,j}) )^T$ & basis function vector of length $K$ \\
        $\mathbf{\vartheta} = (\vartheta_1, \vartheta_2, ..., \vartheta_K) )^T$ & basis parameter vector of length $K$ \\
        \bottomrule
        $\mathbf{\Lambda}(\tilde{Z}_{l-1})_l \in \mathbb{R}^{J \times J}$ & Lambda Matrix of layer $l$, for Layers $1,2,...,L$\\
        $\tilde{Z}_0 = \tilde{Z} = \mathbf{\tilde{h}}(Y)
        $ & Marginally transformed response random variable vector of dimensionality $J$ \\
        $\tilde{Z}_l = \mathbf{\Lambda}(\tilde{Z}_{l-1})_l  \tilde{Z}_{l-1} = 
        (\tilde{Z}_{1,l}, \tilde{Z}_{2,l}, ..., \tilde{Z}_{J,l})^T
        $ & Latent Space vector of dimensionality $J$ after Layer $l$ \\
        $\lambda_{r,c,i}(\mathbf{\tilde{z}}_{i,l-1}) \in \mathbb{R}$ & Lambda Matrix entry for row $r$, column $c$ of observation $i$ in layer $l$ \\
        $\mathbf{P}(\tilde{Z}_0) \in \mathbb{R}^{J \times J}$ & Local Pseudo Precision Matrix \\
        $p(\mathbf{\tilde{z}}_i)_{r,c} \in \mathbb{R}$ & Local Pseudo Precision Matrix entry for row $r$, column $c$ of observation $i$ \\
        $\rho(\mathbf{\tilde{z}}_i)_{r,c} \in \mathbb{R}$ & Local Pseudo Conditional Correlation for row $r$, column $c$ of observation $i$ \\
        \bottomrule
        $ \tau $ & Penalty hyperparameter \\
        $ g() $ & Some invertible function \\
        $ c() $ & conditioner function of a normalizing flow \\
        $ \ell_i(\mathbf{y}_i) $ & Log-Likelihood contribution of response vector observation $i$ \\
        $\mathbf{F} \in {\{0,1\}}^{J \times J}$ & Flipping matrix which flips a vector\\
        
    \end{tabular}
    \caption{Summary of notation used in the paper.}
    \label{tab:notation}
\end{table}

\FloatBarrier

\section{Challenge of Identifying Independencies} \label{appendix:challenge_independence_identification}

As discussed in \autoref{theory:mctm_ggm}, zero entries in the precision matrix $\mathbf{P}$ signify conditional independence in the MCTM. However, the sequential nonlinear layers $\mathbf{\Lambda}$ of the GTM, which enable fitting complex data structures, also make identifying conditional independencies more challenging. We illustrate this issue using an example of a GTM with three layers for three-dimensional data, i.e., $J=L=3$.
The transformation layer does not induce dependence, so we will focus on the decorrelation layers and assume $\bfth(\bfy)=\bfy = \bftz$ for simplification. We aim to highlight two main points: First, that every element of the local pseudo-precision matrix $\mathbf{P}(\bfy)$ is dependent on every input $\bfy$. 
Second, that this dependence results in zero off-diagonal element not necessarily implying conditional independence — that is, $p_{u,v}=0$ does not mean $y_u \independence y_v | y_{\backslash {u,v} }$. In our example, $p_{3,1}=0$ does not imply $y_3 \independence y_1 | y_{2}$.

To better illustrate the first point, we substitute the functions $\lambda_{r,c,l}(\bftz_l)$ with variables $[a, b, ..., g]$ in each layer matrix $\mathbf{\Lambda}_l(\bfy)$ and define the joint lambda matrix as:
\begin{align*}
\mathbf{\Lambda}(\bfy) = \mathbf{\Lambda}_3(\bftz_2) \mathbf{\Lambda}_2^T(\bftz_1) \mathbf{\Lambda}_1(\bfy) =
\begin{pmatrix}
1 & 0 & 0 \\
g & 1 & 0 \\
h & i & 1
\end{pmatrix}
\begin{pmatrix}
1 & 0 & 0 \\
d & 1 & 0 \\
e & f & 1
\end{pmatrix}^T
\begin{pmatrix}
1 & 0 & 0 \\
a & 1 & 0 \\
b & c & 1
\end{pmatrix}
= 
\end{align*}
\begin{align*}
\begin{pmatrix}
1 + ad + be & d + ce & e \\
g + a(1 + dg) + b(f + eg) & 1 + dg + c(f + eg) & f + eg \\
h + a(dh + i) + b(1 + eh + fi) & dh + i + c(1 + eh + fi) & 1 + eh + fi \\
\end{pmatrix}
\end{align*}
The local pseudo-precision matrix is defined as $P = \mathbf{\Lambda}(\bfy)^{\sfT} \mathbf{\Lambda}(\bfy)$. Suppose we assume that $y_1$ and $y_3$ are conditionally independent given $y_2$. We would then focus on the splines that influence the entry of the precision matrix in the third row, first column:
\begin{align*}
p_{3,1} = 
\begin{pmatrix}
e & f +eg & 1 + eh + fi
\end{pmatrix}
\begin{pmatrix}
1 + ad + be \\
g + a(1 + dg) + b(f + eg) \\
h + a(dh + i) + b(1 + eh + fi)
\end{pmatrix}
\end{align*}
If we then set all three splines connecting the two variables, namely $b,e,h$ to be zero as there is no interaction we are left with:
\begin{align*}
p_{3,1} = 
\begin{pmatrix}
0 & f +0g & 1 + 00 + fi
\end{pmatrix}
\begin{pmatrix}
1 + ad + 00 \\
g + a(1 + dg) + 0(f + 0g) \\
0 + a(d0 + i) + 0(1 + eh + fi)
\end{pmatrix}
=
\end{align*}
\begin{align*}
\begin{pmatrix}
0 & f & 1 + fi
\end{pmatrix}
\begin{pmatrix}
1 + ad \\
g + a(1 + dg) \\
ai
\end{pmatrix} 
\end{align*}
which is equal to: 
\begin{align*}
p_{3,1} = f * (g + a(1 + dg)) + (1 + fi) * ai
\end{align*}
This illustrates the type of restrictions that impact the splines modeling non-independent relationships—specifically those between $y_1$ and $y_2$, involving splines $a$, $d$, $g$, and those between $y_2$ and $y_3$, involving splines $c$, $f$, $i$. Intuitively, this occurs because the effect of $y_3$ on $y_2$ can influence $y_1$ in the subsequent layer via $y_2$’s effect on $y_1$. The reverse is also true. Similar computations reveal that all other elements $p_{1,1}, p_{2,2}, p_{3,3}, p_{2,1}, p_{3,2}$ depend on all input dimensions $y_1, y_2, y_3$.

Knowing that all elements of $\bfP(\bfy)$ depend on $\bfy$, we express the density $f(\bfy)$ in the Gaussian latent space of $\mathbf{\Lambda}_l(\bfy) \bfy$, take the logarithm, and drop the constant term to obtain:
\begin{align*}
\log f(\mathbf{\Lambda}_l(\bfy) \bfy) = \log(f(y_1, y_2, y_3)) \propto 
\end{align*}
\begin{align*}
p_{1,1}(\bfy) y_1^2 + 
p_{2,2}(\bfy) y_2^2 + 
p_{3,3}(\bfy) y_3^2 +
p_{2,1}(\bfy) y_2 y_1 + 
p_{3,1}(\bfy) y_3 y_1 + 
p_{3,2}(\bfy) y_3 y_2 
\end{align*}
From this, we observe that each entry in the local pseudo precision matrix may depend on all inputs. Therefore, even if $p_{3,1}(\bfy)=0$, it does not lead to conditional independence because $y_1$ and $y_3$ still appear in the same sum element. In other words, in density—not log space—$y_1$ and $y_3$ do not necessarily factorize, indicating they are not necessarily conditionally independent. This is because additional requirements would involve each other entry $p_{u,v}(\bfy)$, where $\{u,v\} \in \{\{1,1\}, \{2,2\}, \{3,3\}, \{2,1\}, \{3,1\}, \{3,2\}\}$, depending solely on either $y_1$ or $y_3$. Hence, the restriction $p_{3,1}(\bfy)=0$ is insufficient.
Nevertheless, as shown in our simulation study \autoref{sec:simstudy_vine_copulas} and application \autoref{sec:complex_distribution}, $p_{3,1}(\bfy)=0$ serves as a good approximate indicator for conditional independence in the GTM, even for deeper models.

\section{Computational Details} \label{app:comp} 

\subsection{Implementation} \label{app:implementation}
The model was implemented using the PyTorch framework, selected for its flexible auto-differentiation capabilities \cite{pytorch}.

B-spline evaluations were computed utilizing De Boor's algorithm \cite{deBoor_bsplines}, which optimizes performance by evaluating only non-zero bases rather than all basis functions and taking the weighted sum. 

For synthetic sampling, the inverse of both decorrelation and transformation layers must be computed. 
The decorrelation layers come with a straightforward closed-form inverse, as typical of any coupling layer, described in \autoref{alg:inverse_decorrelation}.

\begin{algorithm}[H]
\caption{Calculating the Inverse Decorrelation Layer}
\label{alg:inverse_decorrelation}
\textbf{Input:} The output of $\mathbf{\Lambda}_l(\bftz_{l-1})$ Layer $l$ resulting in $\bftz_{l}$ \\
\textbf{Output:} The input to the $\mathbf{\Lambda}_l(\bftz_{l-1})$ Layer $l$, i.e., $\bftz_{l-1}$
\begin{algorithmic}
\FORALL{Data Dimensions $j \in [1,2,...,J]$ iteratively}
    \STATE \[
    \tilde{z}_{j,l-1} = \tilde{z}_{j,l} -  \sum_{i=1}^{j-1} \lambda_{i,l-1}(\tilde{z}_{i,l-1})\tilde{z}_{i,l-1}
    \]
\ENDFOR
\end{algorithmic}
\end{algorithm}

For the transformation layer, although it is invertible, it lacks a closed-form solution for the inverse transformation. However, we can achieve a highly precise inverse with a workaround, as detailed in \autoref{alg:inverse_transformation}.

\begin{algorithm}[H]
\caption{Approximating the Inverse Transformation Layer}
\label{alg:inverse_transformation}
\textbf{Input:} Minimum and maximum values of the input $\mathbf{y}$ \\
\textbf{Output:} Approximation of the inverse transformation layer
\begin{algorithmic}
    \STATE - Generate a large number of evenly distributed observations on the line between the minimum and maximum values of the input $\mathbf{y}$.
    \STATE - Pass these observations through the transformation layer to obtain a sizable sample of inputs $\mathbf{y}$ and their corresponding outputs $\mathbf{\tilde{z}}$.
    \STATE - Reverse the roles, treating outputs $\mathbf{\tilde{z}}$ as new inputs and inputs $\mathbf{y}$ as outputs.
    \STATE - Train a B-spline with numerous knots and without penalization using the reversed dataset by solving the linear regression problem using ordinary least squares (OLS).
\end{algorithmic}
\end{algorithm}

\subsection{Optimization} \label{app:optimisation}

We optimize the model using a second-order optimizer, similar to the approach adopted for the MCTM \cite{MCTM}. Specifically, we employ the Limited-memory Broyden–Fletcher–Goldfarb–Shanno algorithm (LBFGS) with a Wolfe line search for the learning rate, as implemented in PyTorch \cite{pytorch}.
A crucial aspect that enhances model results and reduces training time is the pretraining of the transformation layer prior to training the complete model with decorrelation layers. This two-step approach of individually pretraining the CTM marginals before training the joint model, also used by \cite{MCTM} for the MCTM, is instrumental in finding optimal solutions for flexible model specifications and complex datasets.
For hyperparameter tuning, we utilize the Tree-structured Parzen Estimator (TPE) Sampler \cite{tpe_sampler}, implemented in the Optuna package \cite{optuna}. This sampler begins with a random search and subsequently performs targeted searches by fitting a mixture of Gaussians to previous trials in order to identify promising hyperparameters. This method is particularly suitable for our use case, as the TPE sampler, in its multivariate configuration, accounts for hyperparameter dependencies, which is significant due to the intuitive interdependence of the three penalties. 
Finally, we implement a validation set-based early stopper, which is crucial in applications aimed at maximizing the model's generalization capability.

\section{Vine Copulas} \label{app:vine_copula}

Vine copulas, or \textit{pair-copula constructions} (PCCs), decompose a multivariate copula into a series of bivariate copulas organized in a graphical structure known as a \textit{vine}. This approach facilitates modeling intricate dependencies by breaking down the multivariate problem into simpler, manageable components. There are three main types of vines: \textit{C-vines} (Canonical vines), \textit{D-vines} (Drawable vines), and \textit{R-vines} (Regular vines).
\textit{C-vines} are distinguished by a central node connected to all other nodes at each level of the decomposition, allowing for a hierarchical modeling approach where dependencies are sequentially introduced. In contrast, \textit{D-vines} present a sequential structure where dependencies are introduced in a chain-like manner, making them particularly suitable for time-series data and applications with natural ordering.
Finally, \textit{R-vines} provide a generalization that incorporates aspects of both \textit{C-vines} and \textit{D-vines}. The versatility of vine copulas lies in their ability to integrate various types of bivariate copulas at each step of the decomposition, offering a flexible framework for capturing different types of nonlinear dependencies, including tail dependencies and asymmetries.

\subsection{Proof of Theorem 1: Full Conditional Independence in R-Vines} \label{app:prof_theorem_rvine_ci}

In the following we prove Theorem \ref{theo:rvine_ci} for constructing R-Vines with full conditional independence structures.
To do so we first propose and prove Propositions \ref{prop1} and \ref{prop2}.

\begin{proposition} \label{prop1}
 Consider a $J$-dimensional random vector $\mathbf{Y}$ with joint density $f_{1,\ldots,J}(y_1,\ldots,y_J)$. If the joint distribution is represented as a truncated R-vine $(\mathcal{F}, \mathcal{V}, \mathcal{B})$ with marginal distributions $\mathcal{F}$, the regular vine tree sequence $\mathcal{V}$ and the bivariate pair copulas $\mathcal{B}$, a pair $(Y_u,Y_v)$ of elements of $\mathbf{Y}$ is fully conditionally independent if any pair copula $c_{l,m|\mathbf{J}}\in\mathcal{B}$ that involves both indices $u$ and $v$ in either, $l$, $m$ or the conditioning set $\mathbf{J}$ is an independence copula.
\end{proposition}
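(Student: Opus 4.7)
The plan is to start from the standard R-vine density decomposition and show that the conditional density $f(y_u, y_v \mid \mathbf{y}_{-u,v})$ factorizes into a product of a function of $(y_u, \mathbf{y}_{-u,v})$ and a function of $(y_v, \mathbf{y}_{-u,v})$, which by definition yields $Y_u \independence Y_v \mid \mathbf{Y}_{-u,v}$. Writing the density as
$$f(y_1,\ldots,y_J) = \prod_{j=1}^J f_j(y_j)\,\prod_{k=1}^{J-1}\prod_{e \in T_k} c_{a_e,b_e\mid D_e}\!\left(F(y_{a_e}\mid \mathbf{y}_{D_e}),\,F(y_{b_e}\mid \mathbf{y}_{D_e})\right),$$
I would partition the edge set $E = \bigcup_k T_k$ according to how $u$ and $v$ appear in the index set $I_e := \{a_e, b_e\} \cup D_e$: into $E_{uv}$ (both indices appear), $E_u$ (only $u$), $E_v$ (only $v$), and $E_0$ (neither). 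By hypothesis every pair copula $c_e$ with $e \in E_{uv}$ is the independence copula and equals $1$, so those factors drop out of the product entirely.

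The key step is to verify that each pair copula $c_e$ for $e \in E_u$ depends only on $y_u$ and $\mathbf{y}_{-u,v}$ (never on $y_v$), and symmetrically for $E_v$. Since $v \notin I_e$ by construction, the conditioning vector $\mathbf{y}_{D_e}$ does not contain $y_v$, so the direct dependence is clean; the nontrivial point is that the arguments $F(y_{a_e} \mid \mathbf{y}_{D_e})$ and $F(y_{b_e} \mid \mathbf{y}_{D_e})$ are themselves computed recursively via Hfunctions from pair copulas lying in lower trees. Using the proximity condition of an R-vine, I would prove by induction on the tree level that these recursions only invoke pair copulas whose index sets are contained in $I_e$, so no dependence on $y_v$ can sneak in. This is the main obstacle, and it is where the regular vine structure is essential; without proximity, an intermediate Hfunction could pull in indices outside $I_e$ and the argument would collapse.

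Once step two is in hand, the joint density factors as
$$f(y_1,\ldots,y_J) = g_0(\mathbf{y}_{-u,v}) \cdot g_u(y_u, \mathbf{y}_{-u,v}) \cdot g_v(y_v, \mathbf{y}_{-u,v}),$$
where $g_u$ collects $f_u(y_u)$ together with the $c_e$ for $e \in E_u$, $g_v$ does the analogous thing for $v$ and $E_v$, and $g_0$ gathers the remaining marginals and the factors from $E_0$. Integrating out $y_u$ and $y_v$ yields $f(\mathbf{y}_{-u,v}) = g_0(\mathbf{y}_{-u,v}) \int g_u\,dy_u \int g_v\,dy_v$, so that dividing gives a conditional density of product form in $y_u$ and $y_v$ conditional on $\mathbf{y}_{-u,v}$. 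This is exactly the definition of $Y_u \independence Y_v \mid \mathbf{Y}_{-u,v}$, completing the proof.
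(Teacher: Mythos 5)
Your proof is correct and follows the same skeleton as the paper's: both start from the truncated R-vine density and classify the pair-copula factors according to whether $u$, $v$, both, or neither appear among the conditioned arguments and the conditioning set. The difference lies in how the ``both appear'' factors are dispatched. You observe that the independence copula has density identically equal to $1$, so every factor indexed by $E_{uv}$ simply vanishes from the product; all that remains is to check that each surviving factor depends only on the variables in its own index set $I_e$, which holds because $F(y_{a_e}\mid\mathbf{y}_{D_e})$ is by definition a function of $(y_{a_e},\mathbf{y}_{D_e})$ alone --- the h-function recursion you flag as the ``main obstacle'' is only a computational device for evaluating that conditional CDF, so your induction through the proximity condition, while harmless and in the spirit of the constructive vine literature, is not actually load-bearing. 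The paper instead keeps the independence-copula factors in play, writes them as a product of the two conditional distribution functions, and then recurses through lower trees (its preliminaries (P1) and (P2)) to peel the dependence on the other index out of those conditional CDFs. Note that (P1) as stated gives the independence copula CDF $C(a,b)=ab$ rather than its density $c(a,b)\equiv 1$, which is what actually enters the vine density, so your more direct route also sidesteps a slip in the paper's write-up. Both arguments land on the same factorization $f = g_0(\mathbf{y}_{-u,v})\,g_u(y_u,\mathbf{y}_{-u,v})\,g_v(y_v,\mathbf{y}_{-u,v})$ and hence on $Y_u \independence Y_v \mid \mathbf{Y}_{-u,v}$.
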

\begin{proof}
    As described in Theorem 5.15 of \cite{czado_vine_book}, the joint density $f_{1,\ldots,J}(y_1,\ldots,y_J)$ of a truncated R-vine structure can be represented as
    \begin{equation}\label{rvinedensity}
    f_{1,\ldots,J}(y_1,\ldots,y_J) = f_1(y_1) \cdot\ldots\cdot f_J(y_J)
    \prod_{j=1}^{J-1} \prod_{e \in E_j} c_{\mathcal{C}_{e,a}, \mathcal{C}_{e,b};\mathbf{J}_e}(F_{\mathcal{C}_{e,a} | \mathbf{J}_e}(y_{e,a}|y_{\mathbf{J}_e}), F_{\mathcal{C}_{e,b} | \mathbf{J}_e}(y_{e,b}|y_{\mathbf{J}_e}) )
    \end{equation}
    i.e.\ as the product of the densities of the marginal distributions and the bivariate copula densities across all trees, where $E_j$ is the edge set of the Tree $T_j$ in the R-vine tree sequence $\mathcal{V}$. Furthermore, $c_e = c_{\mathcal{C}_{e,a} \mathcal{C}_{e,b};\mathbf{J}_e}$ is the pair copula density corresponding to edge $e$ connecting nodes $y_a$ and $y_b$ conditioned on nodes $y_{\mathbf{J}_e}$. The corresponding copula density $c_e$ has the cumulative distribution function values $F_{\mathcal{C}_{e,a} | \mathbf{J}_e}$ and $F_{\mathcal{C}_{e,b} | \mathbf{J}_e}$ as arguments, hence the subscription $c_e = c_{\mathcal{C}_{e,a} \mathcal{C}_{e,b};J_e}$ to highlight that computing the density $c_e$ requires the conditioning sets $\mathcal{C}_{e,a} = y_a | y_{\mathbf{J}_e}$ and $\mathcal{C}_{e,b} = y_b | y_{\mathbf{J}_e}$.

    We then have the following two preliminary results:
    \begin{itemize}
    \item[(P1)]         
    For the independence copula, the copula density is given by
    $$
    c^{indep}_{\mathcal{C}_{e,a}, \mathcal{C}_{e,b};\mathbf{J}_e}(F_{\mathcal{C}_{e,a} | \mathbf{J}_e}(y_{e,a}|y_{\mathbf{J}_e}), F_{\mathcal{C}_{e,b} | \mathbf{J}_e}(y_{e,b}|y_{\mathbf{J}_e}) ) = F_{\mathcal{C}_{e,a} | \mathbf{J}_e}(y_{e,a}|y_{\mathbf{J}_e}) \cdot F_{\mathcal{C}_{e,b} | \mathbf{J}_e}(y_{e,b}|y_{\mathbf{J}_e}),
    $$
    i.e. the joint density reduces to the product of the cumulative distribution functions. 
    \item[(P2)] The conditional cumulative distribution function $F_{\mathcal{C}_{e,a} | \mathbf{J}_e}(y_{e,a}|y_{J_e})$ can be related to the pair copula $c_r$, $r \in E_{j-1}$ from the previous level of the tree sequence $E_{j-1}$. Here, $c_r$ connects nodes $y_a$ and $y_j$ conditioned on $\mathbf{J}_r = \mathbf{J}_e / \{v\}$:
    $$
    F_{\mathcal{C}_{e,a} | \mathbf{J}_e}(y_{e,a}|y_{\mathbf{J}_e}) = \frac{\partial}{\partial F_{\mathcal{C}_{r,j} | \mathbf{J}_r}(y_{r,j}|y_{\mathbf{J}_r})} C_{\mathcal{C}_{r,a}, \mathcal{C}_{r,j};\mathbf{J}_r}(F_{\mathcal{C}_{r,a} | \mathbf{J}_r}(y_{r,a}|y_{\mathbf{J}_r}), F_{\mathcal{C}_{r,j} | \mathbf{J}_r}(y_{r,j}|y_{\mathbf{J}_r}) )
    $$
    If $\mathcal{C}_{r,j}$ is an independence copula, then this simplifies to
    $$
    F_{\mathcal{C}_{e,a} | \mathbf{J}_e}(y_{e,a}|y_{\mathbf{J}_e}) = F_{\mathcal{C}_{r,a} | \mathbf{J}_r}(y_{r,a}|y_{\mathbf{J}_r}).
    $$
    \end{itemize}

    Based on these preliminaries, we now go through all copulas $c_e$ in density (\ref{rvinedensity}) and check if they factorize with respect to $y_u$ and $y_j$ since then the complete density also factorizes. 
    We have the following cases:
    \begin{enumerate}
        \item[(i)] If $l,m,\mathbf{J}$ comprises none or only one of the indices $u,v$, then obviously the copula density can be factorized since it involves at most one of the indices $u,v$ and is constant with respect to the other.
        \item[(ii)] If the indices $l,m$ are given by $u,v$, then neither $u$ nor $v$ can be in the conditioning set $\mathbf{J}$. Then, by assumption, $c_e$ is an independence copula and therefore, as shown in (P1), the copula density factorizes.
        \item[(iii)] The indices $l,m$ contain one of $u,v$ while the other index is part of the conditioning set $\mathbf{J}$. Then, by assumption, $c_e$ is an independence copula and equal to the product of conditional distributions. In addition, we have to consider the conditional distributions of the arguments of the copula density. One does not comprise any of $u,v$ (neither as the argument nor in the conditioning set) while the other, due to (P2) is the result of a copula containing $u$ and $v$ which is an independence copula and thus factorizes by being the product of conditional distributions which are again products of conditional distributions due to independence copulas until $u$ and $v$ factorize.
        \item[(iv)] If both indices $u,v$ are part of the conditioning set $\mathbf{J}$, we again use (P2) and similar arguments as in case (iii).
    \end{enumerate}
\end{proof}

\begin{proposition} \label{prop2}
If a pair of variables $Y_u,Y_v$ in an arbitrary R-vine has its pair copula $c_{u,v|\mathbf{J}}$ in Tree $T_u$, then in all pair copulas $c_{l,m|\tilde{\mathbf{J}}}$ of trees $[T_1,T_2,\ldots,T_{u-1}]$ lower in the tree sequence, $Y_u$ and $Y_v$ will never simultaneously be in the arguments $l,m$ nor the conditioning set $\tilde{\mathbf{J}}$ at the same time.
\end{proposition}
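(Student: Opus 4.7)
The plan is to prove Proposition \ref{prop2} via a minimality argument on the \emph{constraint set} (also called complete union) of each R-vine edge. For an edge $e$ in tree $T_j$ carrying the pair copula $c_{l,m\mid\tilde{\mathbf{J}}}$, define $U_e := \{l,m\}\cup\tilde{\mathbf{J}}$, so that $|U_e|=j+1$. Writing $k$ for the tree index containing $c_{u,v\mid\mathbf{J}}$, the claim to be established is that for every edge $e'$ in any tree $T_j$ with $j<k$, one has $\{u,v\}\not\subseteq U_{e'}$. The argument rests on two standard structural facts about R-vines: (i) for $j\geq 2$, an edge $e=\{a,b\}\in E_j$ is formed by joining two edges $a,b\in E_{j-1}$ that share a common node in $T_{j-1}$, and the constraint sets decompose as $U_e=U_a\cup U_b$ with $U_a\cap U_b$ equal to the conditioning set of $e$ and $U_a\triangle U_b$ equal to the conditioned pair of $e$; and (ii) across the entire R-vine, each unordered pair of variables arises as the conditioned pair of exactly one edge, a counting consequence of $\sum_{j=1}^{J-1}|E_j|=\binom{J}{2}$ together with the distinctness of conditioned pairs across edges.

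With these facts in hand, I would define $k^\star$ as the smallest index such that some edge $e^\star \in E_{k^\star}$ satisfies $\{u,v\}\subseteq U_{e^\star}$. The edge carrying $c_{u,v\mid\mathbf{J}}$ itself witnesses $k^\star\leq k$. The crucial step is to show that the conditioned pair of $e^\star$ is exactly $\{u,v\}$. If $k^\star=1$, then $|U_{e^\star}|=2$ and thus $U_{e^\star}=\{u,v\}$, which is automatically the conditioned pair. If $k^\star\geq 2$, write $e^\star=\{a,b\}$ with $a,b\in E_{k^\star-1}$, so $U_{e^\star}=U_a\cup U_b$. Minimality of $k^\star$ forbids either $U_a$ or $U_b$ from containing both $u$ and $v$; hence (after relabelling) $u\in U_a\setminus U_b$ and $v\in U_b\setminus U_a$, giving $\{u,v\}\subseteq U_a\triangle U_b$. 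Since $|U_a\triangle U_b|=2$, equality holds, and $\{u,v\}$ is the conditioned pair of $e^\star$. Fact (ii) then forces $e^\star$ to be the unique edge carrying $c_{u,v\mid\mathbf{J}}$, so $k^\star=k$. Consequently, no edge in any lower tree $T_1,\ldots,T_{k-1}$ can have both $u$ and $v$ in its constraint set, which is exactly the proposition.

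The main obstacle I expect is not the minimality argument itself, which is elementary, but rather the bookkeeping involved in the constraint-set calculus — in particular, verifying the size conditions $|U_e|=j+1$, $|U_a\cap U_b|=j-1$, $|U_a\triangle U_b|=2$, and the uniqueness of the edge carrying a given conditioned pair. These are classical R-vine results (see, e.g., \cite{czado_vine_book}), so I would invoke them rather than re-derive them, keeping the proof concentrated on the minimality step that does the real work.
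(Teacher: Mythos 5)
Your proof is correct, but it takes a genuinely different route from the paper's. The paper argues by a direct case analysis on how $u$ and $v$ can appear in an edge label of a lower tree: both as conditioned arguments is impossible because each pair is the conditioned pair of exactly one edge; one as argument and one in the conditioning set is impossible because evaluating the required conditional CDF would need the copula $c_{u,v\mid\mathbf{J}}$ from the higher tree $T_u$, violating the vine construction. You instead work with the constraint sets $U_e$ and run a minimality argument: taking the lowest tree index $k^\star$ at which some edge's constraint set contains $\{u,v\}$, the decomposition $U_{e^\star}=U_a\cup U_b$ together with $|U_a\triangle U_b|=2$ forces $\{u,v\}$ to be the conditioned pair of $e^\star$, and uniqueness of conditioned pairs then pins $k^\star$ to the tree actually carrying $c_{u,v\mid\mathbf{J}}$. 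Your version buys two things: it is more rigorous (the paper's case (iii) appeals somewhat informally to what ``would violate the conditions of a valid vine structure''), and it uniformly covers the case where both $u$ and $v$ sit in the conditioning set $\tilde{\mathbf{J}}$, which the paper's explicit case list does not separately address. The cost is reliance on the constraint-set calculus ($|U_e|=j+1$, intersection equals the conditioning set, symmetric difference equals the conditioned pair), but these are standard results that you are entitled to cite from \cite{czado_vine_book}. One small presentational point: when you conclude ``after relabelling, $u\in U_a\setminus U_b$ and $v\in U_b\setminus U_a$,'' it is worth first noting explicitly that minimality excludes $u$ or $v$ from $U_a\cap U_b$, since an element of the intersection together with the other index lying anywhere in $U_a\cup U_b$ would place both indices in a single constraint set from tree $T_{k^\star-1}$.
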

\begin{proof}
    Based on the detailed formulas derived for proving Proposition \ref{prop1}, we go through all possible cases:
     \begin{enumerate}
        \item[(i)] If $l,m,\tilde{\mathbf{J}}$ contain none or at most of the indices $u,v$, there is nothing to show.
        \item[(ii)] $u,v$ can not be the arguments $l,m$ of $c_{l,m|\tilde{\mathbf{J}}}$ since each pair of variables is only modeled once in a vine sequence and therefore if $c_{u,v|\mathbf{J}}$ is in tree $T_u$ it cannot be in trees $[T_1,T_2,\ldots,T_{u-1}]$     
        \item[(iii)] If one index is an argument to $c_{l,m|\tilde{\mathbf{J}}}$ while the other index is in the conditioning set, we would need the copula $c_{u,v|\mathbf{J}}$ from tree $T_u$ to determine the conditional CDF $F_{\mathcal{C}_{e,l} | J_e}(y_{e,l}|y_{J_e})$ which is not possible since it would violate the conditions of a valid vine structure.
    \end{enumerate}
\end{proof}
With the two propositions \ref{prop1} and \ref{prop2}, we can now prove Theorem \ref{theo:rvine_ci}:
\begin{proof}
    Jue to Proposition \ref{prop2}, we know that $Y_u$ and $Y_v$ are not jointly modeled in Trees $[T_1,T_2,\ldots,T_{u-1}]$ for an arbitrary R-vine structure.
    Hence, under the conditional independence definition of Proposition~\ref{prop1} they are conditionally independent based on this truncated model.
    Furthermore, as by assumption all pair copulas in trees $[T_i,T_{u+1},\ldots,T_{J-1}]$ are independence copulas, all these copulas factorise into conditional marginals that only depend on the conditioning variables of trees $[T_1,T_2,\ldots,T_{u-1}]$ to which $Y_u$ and $Y_v$ do not belong. As a consequence, no pair copula in Trees $[T_1,T_2,\ldots,T_{u-1}]$ contains both $Y_u$ and $Y_v$ and no marginal in $[T_i,T_{u+1},\ldots,T_{J-1}]$ depends on $Y_u$ and $Y_v$ at the same time. 
    Taken together, this results in all factors of the joint density never containing both $Y_u$ and $Y_v$ which makes the pair of variables $Y_u,Y_v$ fully conditionally independent, i.e. $Y_u \independence Y_v | \mathbf{Y}_{-u,v}$.
\end{proof}

\section{Appendix: Simulation Study} \label{app:vine_sim}

\begin{figure}[ht]
    \centering
    \footnotesize{$\kld$} \\
    \begin{minipage}{0.32\textwidth}
        \centering
        \includegraphics[width=\textwidth]{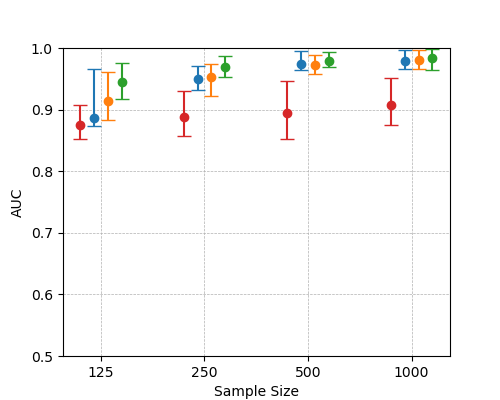}
        \subcaption{D-vine}
    \end{minipage}
    \begin{minipage}{0.32\textwidth}
        \centering
        \includegraphics[width=\textwidth]{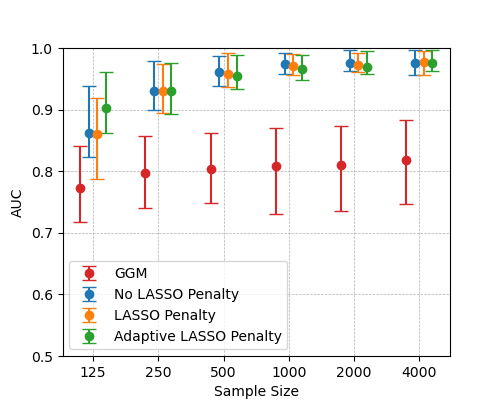}
        \subcaption{R-vine}
    \end{minipage}
    \begin{minipage}{0.32\textwidth}
        \centering
        \includegraphics[width=\textwidth]{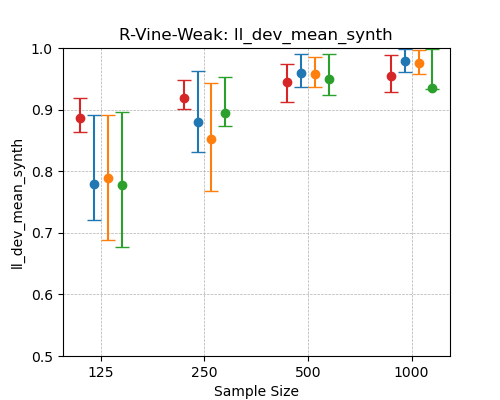}
        \subcaption{R-vine-Weak}
    \end{minipage}
    \begin{minipage}{0.32\textwidth}
        \centering
        \includegraphics[width=\textwidth]{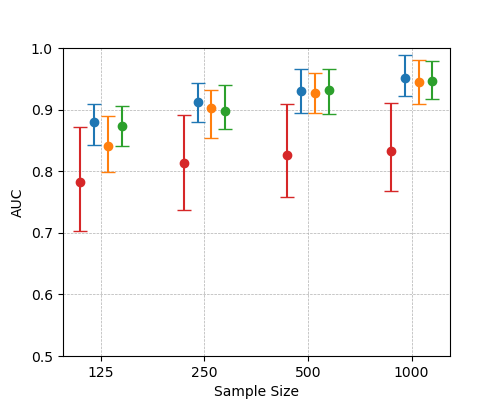}
        \subcaption{R-vine-Positive}
    \end{minipage}
    \begin{minipage}{0.32\textwidth}
        \centering
        \includegraphics[width=\textwidth]{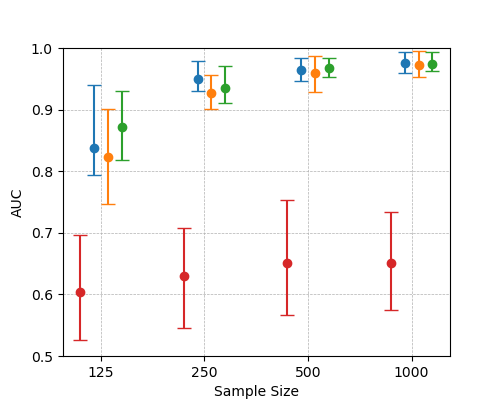}
        \subcaption{C-vine}
    \end{minipage}
    \caption{Across different vine Scenarios, the figure depicts the $\auc$ of the GTM in identifying full conditional independencies based on the $\kld$ between the GTM and the GTM with independence assumption for each dimension pair. The GTM performance for different training sample sizes is presented for the model without a lasso penalty in blue, with a lasso penalty in orange and with an adaptive lasso penalty in green, and the benchmark GGM in red. The $\kld$ scores are approximated via $10.000$ synthetic samples from the GTM. The dots represent the means and the whiskers the $20\%$ and $80\%$ quantiles across $30$ seeds.}
    \label{app_fig:xvine_auc_kld}
\end{figure}

\begin{figure}[ht]
    \centering
    \footnotesize{$\rho_{r,c,-}$} \\
    \begin{minipage}{0.32\textwidth}
        \centering
        \includegraphics[width=\textwidth, scale=1]{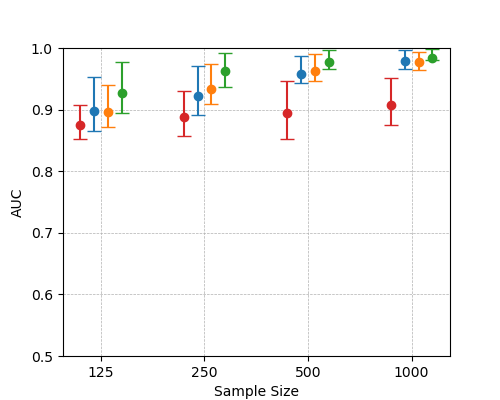}
        \subcaption{D-vine}
    \end{minipage}
    \begin{minipage}{0.32\textwidth}
        \centering
        \includegraphics[width=\textwidth, scale=1]{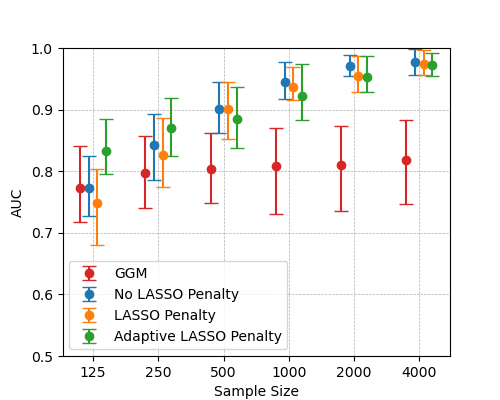}
        \subcaption{R-vine}
    \end{minipage}
    \begin{minipage}{0.32\textwidth}
        \centering
        \includegraphics[width=\textwidth, scale=1]{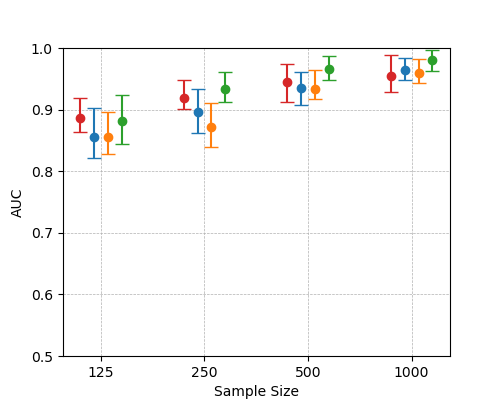}
        \subcaption{R-vine-Weak}
    \end{minipage}
    \begin{minipage}{0.32\textwidth}
        \centering
        \includegraphics[width=\textwidth, scale=1]{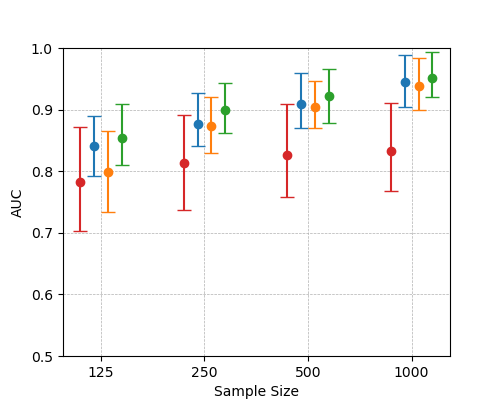}
        \subcaption{R-vine-Positive}
    \end{minipage}
    \begin{minipage}{0.32\textwidth}
        \centering
        \includegraphics[width=\textwidth, scale=1]{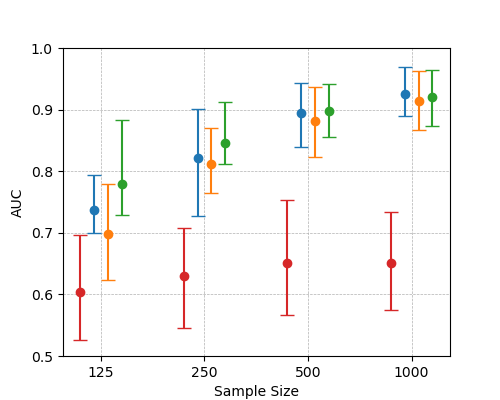}
        \subcaption{C-vine}
    \end{minipage}
    \caption{Across different vine Scenarios, the figure depicts the $\auc$ of the GTM in identifying full conditional independencies based on the absolute mean local pseudo conditional correlation $\rho_{r,c,-}$ between the GTM and the GTM with independence assumption for each dimension pair. The GTM performance for different training sample sizes is presented for the model without a lasso penalty in blue, with a lasso penalty in orange and with an adaptive lasso penalty in green and the benchmark GGM in red. The absolute mean local pseudo conditional correlation scores are approximated via $10.000$ synthetic samples from the GTM. The dots represent the means and the whiskers the $20\%$ and $80\%$ quantiles across $30$ seeds.}
    \label{app_fig:xvine_auc_condcorr}
\end{figure}

\begin{figure}[ht]
    \centering
    \footnotesize{$p_{r,c,-}$} \\
    \begin{minipage}{0.32\textwidth}
        \centering
        \includegraphics[width=\textwidth]{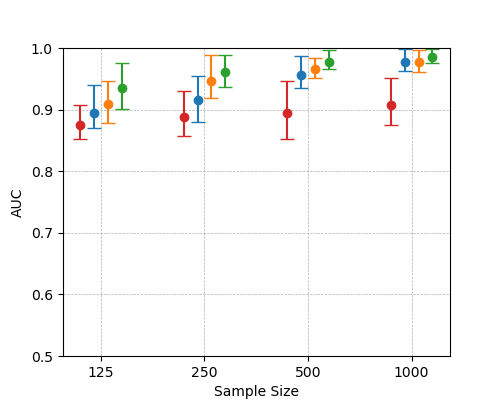}
        \subcaption{D-vine}
    \end{minipage}
    \begin{minipage}{0.32\textwidth}
        \centering
        \includegraphics[width=\textwidth]{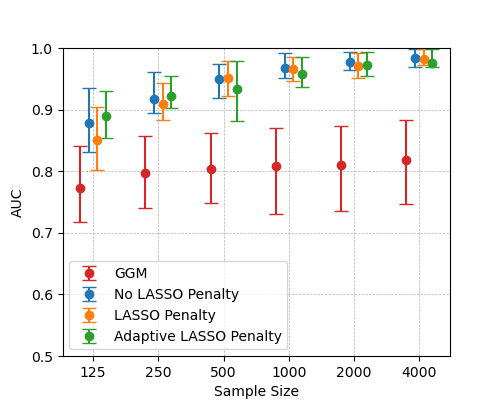}
        \subcaption{R-vine}
    \end{minipage}
    \begin{minipage}{0.32\textwidth}
        \centering
        \includegraphics[width=\textwidth]{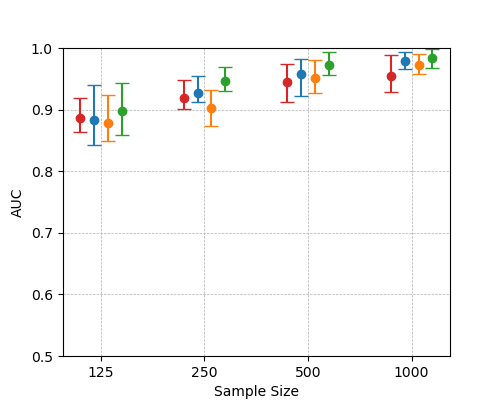}
        \subcaption{R-vine-Weak}
    \end{minipage}
    \begin{minipage}{0.32\textwidth}
        \centering
        \includegraphics[width=\textwidth]{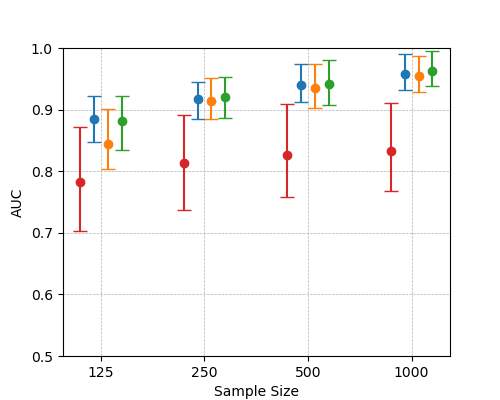}
        \subcaption{R-vine-Positive}
    \end{minipage}
    \begin{minipage}{0.32\textwidth}
        \centering
        \includegraphics[width=\textwidth]{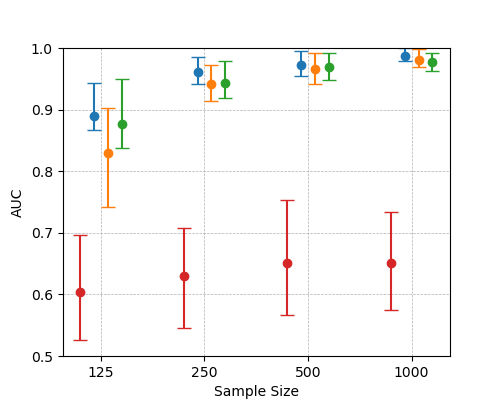}
        \subcaption{C-vine}
    \end{minipage}
    \caption{Across different vine Scenarios, the figure depicts the $\auc$ of the GTM in identifying full conditional independencies based on the absolute mean local pseudo precision matrix $p_{r,c,-}$ entry between the GTM and the GTM with independence assumption for each dimension pair. The GTM performance for different training sample sizes is presented for the model without a lasso penalty in blue, with a lasso penalty in orange and with an adaptive lasso penalty in green, and the benchmark GGM in red. The absolute mean local pseudo precision matrix scores are approximated via $10.000$ synthetic samples from the GTM. The dots represent the means and the whiskers the $20\%$ and $80\%$ quantiles across $30$ seeds.}
    \label{app_fig:xvine_auc_pmatrix}
\end{figure}

\FloatBarrier

\section{Appendix: MAGIC Application} \label{app:magic}

In the following \autoref{alg:cond_samples} we describe how to conditionally sample from the GTM:
\begin{algorithm}[H]
\caption{Sampling two dimensional conditional values}
\label{alg:cond_samples}
\textbf{Input:} Sampling point vector $\mathbf{y}^c \in \mathbb{R}^{J}$, dimensions to sample $u, v$, trained GTM $\mathbf{h}(\bfy)$ with the probability density $f(\bfy)$.\\
\textbf{Output:} conditional samples $(y_u^*, y_v^*)$
\begin{algorithmic}
\STATE - Fix all dimensions except $u$ and $v$: $\bfy_{\backslash \{u,v\}}^c$ remains unchanged.
\STATE - Sample $S$ new candidates $(y_u^{(s)}, y_v^{(s)})$ from their marginal distributions, this can be done by simply sampling from the GTM:
\[
    y_u^{(s)} \sim f(y_u), \quad y_v^{(s)} \sim f(y_v) \quad \text{for } s = 1, \dots, S
\]
\STATE - Compute the marginal density of the conditioning set:
\[
f(\bfy_{\backslash \{u, v\}}^c) \stackrel{\text{GLQ}}{\approx}
\iint f(\bfy^c) \, dy_u \, dy_v
\]
\FORALL {proposed conditional samples $y_u^s, y_v^s$ with $s \in [1,...,S]$}
    \STATE - Compute the conditional density:
    \[
        f(y_u^s, y_v^s | \bfy_{\backslash \{u, v\}})^s = \frac{
        f(y_u^s, y_v^s, \mathbf{y}_{- \{u,v\}})
        }{
        f(\bfy_{\backslash \{u, v\}})
        }
    \]
\ENDFOR
\STATE - Accept samples $(y_u^*, y_v^*)$ from the candidate set $(y_u^s, y_v^s)$ according to there probability $f(y_u^s, y_v^s | \bfy_{\backslash \{u, v\}})^s$.
\end{algorithmic}
\end{algorithm}
One dimensional conditional samples can be sampled accordingly. 
For higher dimensions, where higher dimensional GLQ are unfeasible, one can resort to importance sampling.
In doing so one uses the density values of the proposed samples $f(y_u^s, y_v^s, \mathbf{y}_{- \{u,v\}})$ as the weight $w_s$, then normalizes the weights and samples $(y_u^*, y_v^*)$ from the $(y_u^s, y_v^s)$ according to the weights $w_s$.

\begin{figure}[htbp]
  \centering
  \begin{subfigure}[b]{0.45\linewidth}
  \caption{$Y$}
    \includegraphics[width=\linewidth]{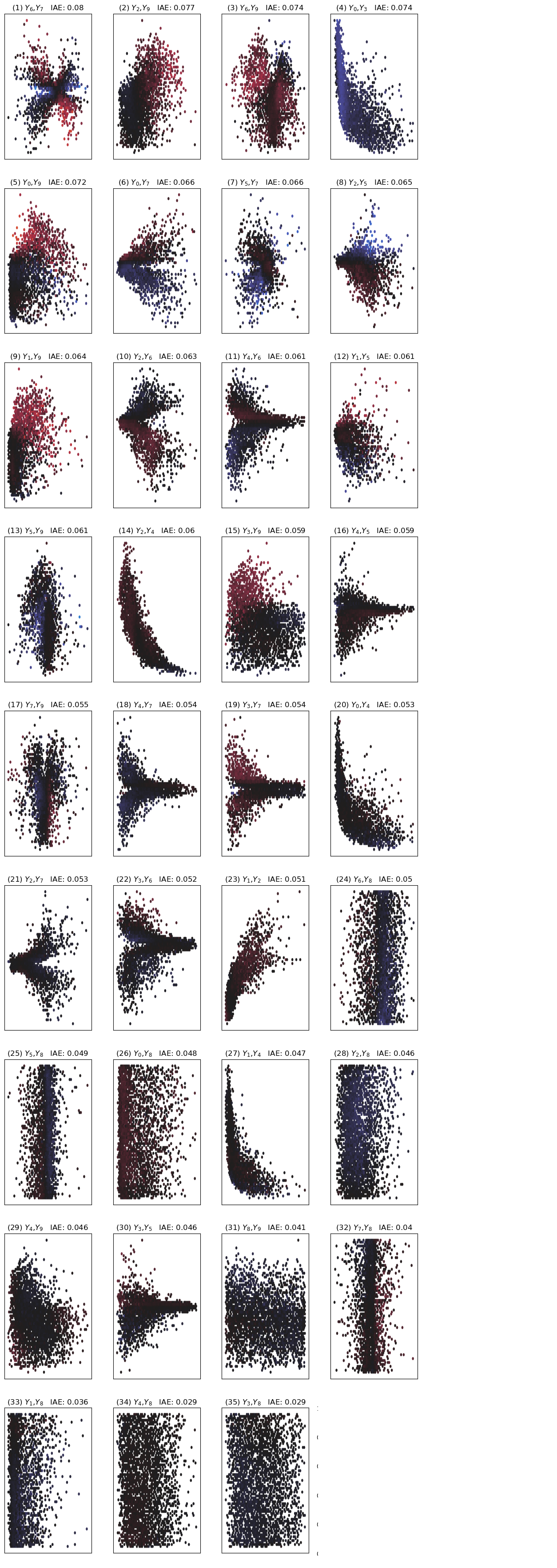}
  \end{subfigure}
  \begin{subfigure}[b]{0.45\linewidth}
  \caption{$\tilde{Z}$}
    \includegraphics[width=\linewidth]{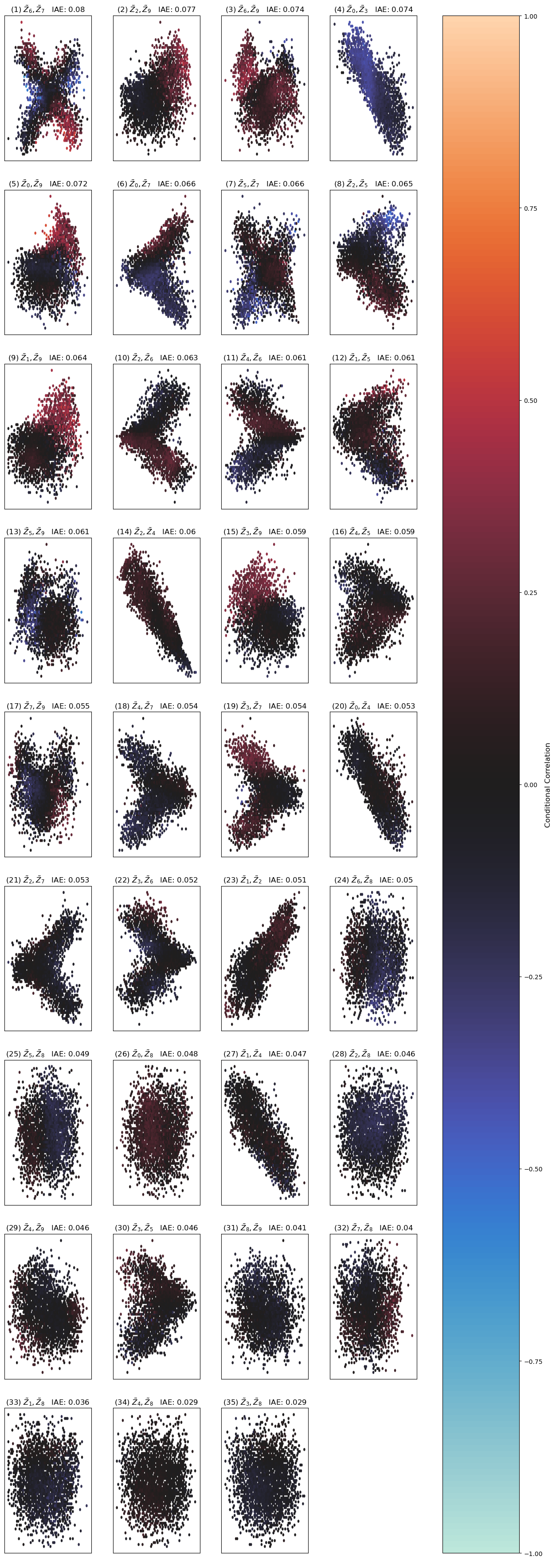}
  \end{subfigure}
  \caption{The 37 conditionally independent pairs, measured by an $\iae < 0.1$, in the hadron dataset ordered by their $\iae$ conditional dependence metrics.
  The plots on the left depict the original training data and the ones on the right depict the data after the marginal transformation.
  Both are colored by the local conditional pseudo-correlations.}
  \label{fig:unsign_conditional_corr_h}
\end{figure}

\FloatBarrier

\end{document}